\theoremstyle{plain}
\newtheorem{theorem}{Theorem}
\newtheorem{lemma}{Lemma}
\theoremstyle{remark}
\newcommand{\cmark}{\ding{51}}%
\newcommand{\xmark}{\ding{55}}%
\begin{document}

\title{\Large \bf Windex: Realtime Neural Whittle Indexing for Scalable Service Guarantees in NextG Cellular Networks}

\author{
{\rm Archana Bura$^*$} \hspace{0.5cm}
{\rm Ushasi Ghosh$^*$}  \hspace{0.5cm}
{\rm Dinesh Bharadia$^*$} \hspace{0.5cm}
{\rm Srinivas Shakkottai$^+$}\\
$^*$University of California at San Diego, \hspace{0.25cm}
$^+$Texas A\&M University, College Station
}

\maketitle




\begin{abstract}

We address the resource allocation challenges in NextG cellular radio access networks (RAN), where heterogeneous user applications demand guarantees on throughput and service regularity. We leverage the Whittle indexability property to decompose the resource allocation problem, enabling the independent computation of relative priorities for each user. By simply allocating resources in decreasing order of these indices, we transform the combinatorial complexity of resource allocation into a linear one. We propose Windex, a lightweight approach for training neural networks to compute Whittle indices, considering constraint violation, channel quality, and system load. Implemented on a real-time RAN intelligent controller (RIC), our approach enables resource allocation decision times of less than 20$\mu$s per user and efficiently allocates resources in each 1ms scheduling time slot. Evaluation across standardized 3GPP service classes demonstrates significant improvements in service guarantees compared to existing schedulers, validated through simulations and emulations with over-the-air channel traces on a 5G testbed.


\end{abstract}

\section{Introduction}
NextG cellular networks must support an array of diverse and heterogeneous applications at user equipment (UE), with each flow demanding service guarantees across multiple dimensions, including throughput and regularity of service (bounded time difference between service instants).  Existing standards identify several such service classes, such as extended mobile broadband (eMBB: high throughput guarantee for file transfers and streaming), ultra-reliable and low-latency  (URLLC: medium throughput and regular service guarantee for control applications)  or massive machine-type communications (mMTC: regular service guarantee for sensing applications).  In addition to these service classes, emerging applications such as extended reality (XR) require both high throughput and regular service guarantees to ensure realtime and high-fidelity situational awareness.  Moreover, the mobility and occlusion at the UE cause the channel to vary rapidly, implying that dynamic resource allocation decisions are needed to ensure that individual service guarantees are actually met.   Thus, these networks must transition from a model of homogenized fairness to one that guarantees individual heterogeneous user requirements.

Ensuring that individual service guarantees are met for all connected users requires the solution of a combinatorial optimization over time, since the number and state of users requesting each service class, as well as their channel conditions vary with time.  Thus, as the combination of URLLC, XR, eMBB, and mMTC users changes, the radio access network (RAN) would need to track the service accorded to each user thus far and determine the resources to be allocated across the users at each time instant---a truly complex problem.    

 Current approaches often divide resources into slices and assign users desiring different service classes to specific slices, with each slice using an internal resource scheduler.  Service guarantees are then provided on average to each slice as a whole~\cite{kokku2011nvs,radiosaber}, i.e., a homogenized target is set per slice.   However, as we will show experimentally, the slicing approach can fail dramatically in providing service guarantees to \emph{each individual user} as the offered load approaches the available resources, implying that application performance will suffer significantly exactly when network conditions are near-congested.  It might appear that the resource scheduling problem to attain individual service guarantees is intractable.

A fundamental property of many queueing systems is that they posses a structure wherein as the state increases, the importance of allocating resources to that queue to maintain service quality also increases.   For instance, we have the intuition that if the goal is to maximize throughput, we must provide service to long queues.  Similarly, if the goal is to ensure service regularity, i.e., ensure that the time-since-last-service (TSLS) of queues is small, we need to focus on queues that have a large TSLS.  

This structural property is known as \emph{indexability}, with the most common indexing approach being the Whittle index \cite{whittle}.  Here, the idea is that there exists an \emph{index function} for each service class, such that the state of each user of that class can be mapped to a number called its Whittle index.  Simply allocating resources in decreasing order of the Whittle index can be shown to be optimal when the number of such users becomes large.  Note that the index is computed independently for each user, i.e, resource allocation in indexable problems is simplified from combinatorial to linear complexity.  Is it possible to exploit this property for ensuring service guarantees in cellular systems?

In this paper, we present Windex, a system that can provably attain service guarantees in cellular networks in a  simple and scalable manner, while operating over a realtime RAN intelligent controller (RIC).  Our contributions are as follows:

(i) We formulate the service constrained resource allocation problem in the manner of a constrained Markov Decision process (MDP), and prove analytically that the problem of ensuring throughput and TSLS guarantees satisfies  indexability.  Here, the Whittle index of each user depends on its backlog queue, its TSLS, its channel conditions and the amount of violation of its service guarantees.

(ii) We develop a workflow for training a Whittle index neural network for each service class, extending recent progress on deep learning for Whittle indices to incorporate the notion of constraint satisfaction guarantees.  The index neural network itself is compact, and we show that the time to compute the Whittle index is less than 20$\mu$s on a general purpose laptop CPU.   It is easily parallelizable and can easily be scaled up to 20 UEs within 150$\mu$s.

(iii) We incorporate the Whittle index based scheduler into an open source realtime RIC platform entitled EdgeRIC~\cite{edgeric}, that obtains RAN state information and conducts resource allocation in each time slot of 1ms.  Since Whittle indices are simply relative priority weights for each user, they are simple to communicate between the RIC and RAN and are compatible with the weight-based scheduler of EdgeRIC.  We believe that this is the first ever deployment of a Whittle index policy in a wireless system.

(iv) We conduct extensive experiments via simulation and emulation, with over-the-air experiments to collect channel traces, under multiple combinations of users drawn from different 3GPP service classes, and facing a variety of channel conditions.  We compare against standard service-agnostic schedulers and a slicing-based approach.  We show that Windex dramatically outperforms the state of the art approaches by a factor of 10 in some cases, and is almost uniformly better at ensuring service guarantees over different service types. Furthermore, our experiments indicate that Windex is highly robust to real world channel variations.

\begin{table*}[t]
\caption{Comparison of various scheduling schemes}
\label{tab:compare}
\begin{center}
\begin{scriptsize}
\begin{sc}
\begin{tabular}{lccccr}
\toprule
Algorithm & Decentralized & Throughput & TSLS  &  Compute\\
~  & (per ue) & Constraint violation & Constraint volation & Time\\
\midrule
max-weight & \cmark & High & High & - \\
max-CQI & \cmark & High & High & - \\
Proportionally Fair & \cmark & High & High & -  \\
RoundRobin & \cmark & High & High & -      \\
PPO (Vanilla RL) & \xmark & - & - & High \\
\midrule
WINDEX & \cmark & Low & Low & Low \\ 
\bottomrule
\end{tabular}
\end{sc}
\end{scriptsize}
\end{center}
    \vspace{-0.1in}
\end{table*}

\section{Background and Related Work}
We provide an overview of current cellular network standards and reinforcement learning (RL) approaches for the control of cellular resource allocation in Open RAN.

\subsection{Open Radio Access Networks (O-RAN) and RAN Intelligent Controllers (RICs)}

The Open RAN initiative encourages the integration of intelligence into what were traditionally monolithic stacks, running conventional optimization-based algorithms to manage network functionalities. The concept of RAN Intelligent Controllers (RICs) is an approach to  introduce an AI-driven air interface. RICs are categorized based on the granularity of their control decisions' timescales.  The Realtime RIC addresses fine-grained events, such as resource allocation, interference detection, and modulation and coding decisions, all within a millisecond timescale. An overview of the ORAN architecture is presented in Figure \ref{fig:oran_ric}.

The RAN Intelligent Controller (RICs) ecosystem is enriched by the advent of open-source options like the OSC RIC \cite{osc} and FlexRIC \cite{flexric}, which cater to both non-realtime and near-realtime functionalities. The domain of realtime RICs, however, represents a relatively nascent field of exploration. Recent scholarly works have begun to shed light on these cutting-edge developments, offering glimpses into the capabilities and applications of realtime RICs within the ORAN framework \cite{dapps, janus, edgeric, tinyric, decima}. Our Windex framework operates in the domain of realtime RAN control via the real time RIC infrastructure, imparting intelligent scheduling control to the MAC layer, hosted at the DU.


 \begin{figure}[htbp]
\begin{center}
\includegraphics[width=0.8\linewidth]{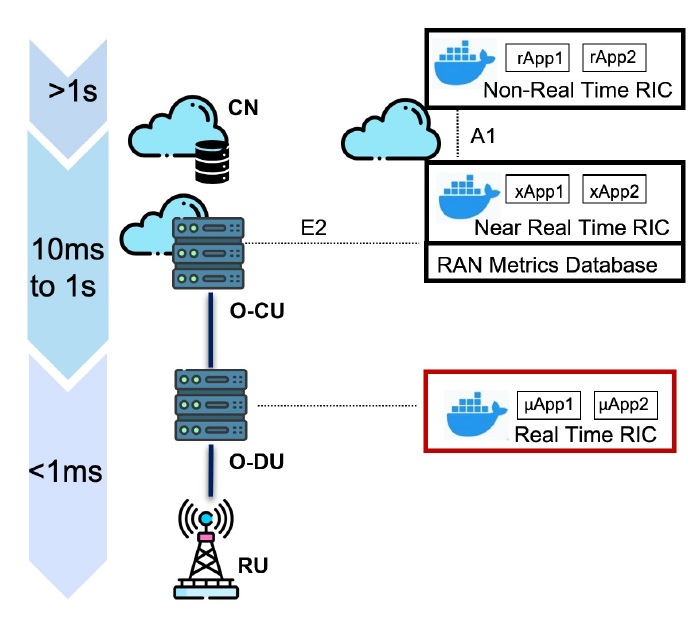}
\caption{O-RAN and RIC overview }
\label{fig:oran_ric}
\end{center}
\end{figure}


\subsection{Network Slicing and Scheduling in 5G}

Network slicing is gaining popularity in 5G networks. By isolating and dedicating network resources for specific use cases, network slicing enhances efficiency, flexibility, and scalability. There exists several implementations such as \cite{nexran, nextGintelligence} that have deployed network slicing on an O-RAN compliant 5G testbed. Further, works such as \cite{radiosaber, kokku2011nvs} talk about network slicing algorithms taking into account the dynamics of channel conditions and the system load of each connected user. Similarly~\cite{semoran} introduces slicing algorithms that run various deep learning based inference applications with an adge server. However, these solutions operate on a scale of seconds—a timescale on which the network environment could have already undergone significant changes. Moreover, these works toggle between various algorithms tailored to different applications.
On the other hand, our framework is holistic, it takes scheduling decisions at the MAC layer. It is not only channel aware, but is also application aware, and is a simple decentralized approach.

\subsection{Reinforcement Learning in wireless networks}


Wireless networks of the next generation, characterized by their complexity and the unpredictable nature of dynamics stemming from channel variability and user mobility, pose significant challenges. Leveraging reinforcement learning (RL) has emerged as a promising approach for addressing real-world problems with uncertain dynamics~\cite{sutton2018reinforcement}. In recent years, there has been growing interest in applying RL techniques to overcome various challenges in wireless networks~\cite{mao2017neural,bhattacharyya2019qflow,sharma2020deep}.

RL-based methods to optimize the quality of user experience in video streaming scenarios is explored in~\cite{mao2017neural} and~\cite{bhattacharyya2019qflow}. ~\cite{sharma2020deep} tackle an LTE downlink scheduling problem using the Deep Deterministic Policy Gradient (DDPG) method from RL, aiming to minimize queue length compared to baseline policies. ~\cite{chinchali2018cellular} utilize a DRL-based scheduler to optimize the scheduling of Internet of Things (IoT) traffic while ensuring service for real-time applications. A DDPG algorithm, integrating expert knowledge is employed in~\cite{wang2019deep}, to improve convergence time of the agents and also improve the performance in a scheduling problem compared to proportional fair allocation schemes. Additionally, ~\cite{fiandrino2023explora} develop a framework for explaining DRL-based methods in Open RAN systems.

However, the aforementioned works do not fully address the complexity of the RAN system, which involves heterogeneous service classes with varying service guarantees and dynamic channel variations. Many existing DRL solutions either fail to scale or do not meet system requirements. Therefore, in this work, we adopt a different approach by formulating these objectives as a restless bandit problem and applying a well-established heuristic known as the Whittle index~\cite{whittle}. Recent work has focused on learning the Whittle index of a Markov decision process~\cite{neurwin}. We demonstrate that our approach can address these challenges by decentralizing training procedures and providing a solution to achieve contrasting service requirements for heterogeneous user equipment (UEs).

\section{Whittle Index Based Scheduling}

\subsection{Service Classes and Desired Guarantees}
\label{sec:services}
\begin{figure}[htbp]
\begin{center}
\includegraphics[width=\linewidth]{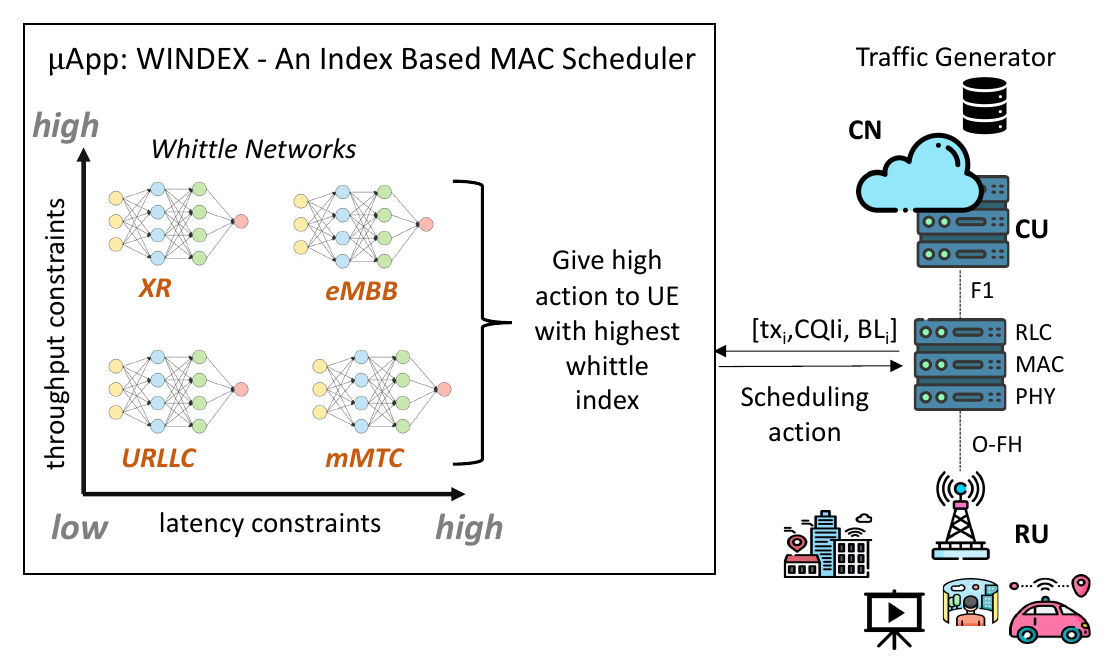}
\end{center}
\caption{System overview}
\label{fig:system_model}
\end{figure}
We consider four distinct UE traffic profiles: Ultra-Reliable Low-Latency Communications (URLLC), massive Machine Type Communications (mMTC), enhanced Mobile BroadBand (eMBB), and Extended Reality (XR) traffic.   The eMBB and XR traffic arrive constant bitrates. In contrast, URLLC and mMTC traffic are characterized by bursty arrivals.  A crucial aspect of our approach is the modeling of throughput and service regularity requirements for each service class, as illustrated in Figure~\ref{fig:system_model}.  This plot depicts throughput on the x-axis against latency on the y-axis.  It reveals that XR traffic demands extremely tight service regularity and high throughput, URLLC requires high service regularity, and eMBB necessitates high throughput. Our scheduling agent's decision-making process is constrained and guided by these stringent per-flow requirements.

\subsection{Optimization Problem Formulation}


The unit of resources available to a radio access network (RAN)  takes the form of time-frequency units called resource blocks (RBs), which may be allocated to a particular user for one time slot.  We consider such a resource scheduling problem at a RAN site that has $N$ connected users, with each user desiring one of the 3GPP service classes with corresponding service guarantees.   We assume that in each time slot $t,$ the scheduler might decide to use either a ``high'' or a ``low'' action for a  particular user, with the idea being that the high action corresponds to allocating a fixed, large number of wireless resource blocks as compared to the low action.  We denote the resource allocation action received by user $n$ at time $t$ by $a^n_t \in \{0,1\}$, where $0$ corresponds to the``low'', and $1$ corresponds to the``high'' allocation.   Since the total number of RBs is limited (corresponds to the bandwidth available to the RAN), we assume that at-most $R$ users can be scheduled to receive a high resource allocation in a given time slot.

The RAN maintains a downlink queue for each user in which data intended for that user is buffered.  The RAN also measures the channel quality of each user (typically, this is averaged over all RBs) and is also aware of when each user was last served.  Thus, the RAN associates each user $n$ with a state $s_t^n$ at time $t,$ which is a vector consisting of its channel quality, its backlog buffer and/or other elements such as time since last service (TSLS).  The state of the system evolves with time, based on the resource allocation action taken, packet arrivals etc.  At each time $t$, the user $n$ can experience multiple types of rewards (or costs)  $r_{j}(s_t^n,a_t^n),$  where $j=1,2,..$ pertains to specific reward type.    For example, $r_1(s_t^n,a_t^n)$ can be the reward associated with the throughput achieved by user $n$ at time $t,$ while  $r_2(s_t^n,a_t^n)$  can  be cost associated with the time since last service provided to user $n.$   Essentially, any particular service class is determined by constraints on the rewards/costs achieved over time, such as minimum permissible throughput, maximum permissible TSLS etc.  

The goal of the RAN scheduler is then to maximize the overall system throughput, while ensuring that the service guarantees of each user are met (if it is feasible to do so).  Formally, we may pose this as the following constrained optimization problem, presented as a infinite-horizon discounted sum with discount factor $\gamma:$
\begin{align}
    &\max_{\pi} \mathbb{E}_{\pi} \left[ \sum_{t=0}^{\infty} \gamma^t \sum_{n=1}^N r_1(s_t^n,a_t^n)\right] \\
    & \text{s.t}~~~ \sum_{n=1}^N a_t^n \leq R, ~~~\forall t,\\
    & \mathbb{E}_{\pi} \left[ \sum_{t=0}^{\infty} \gamma^t  r_1(s_t^n,a_t^n)\right]  \ge B^n, ~~~\forall ~n, \\
    & \mathbb{E}_{\pi} \left[ \sum_{t=0}^{\infty} \gamma^t  r_2(s_t^n,a_t^n)\right] \leq L^n, ~~~~\forall ~n,
\end{align}
where $\pi$ is the resource allocation policy at the RAN, and $ B^n$ and $L^n$ are bounds corresponding to expected service guarantees to be provided to user $n.$ Here, the expectation $\mathbb{E}_\pi[.]$ is taken with respect to the random action taken by the policy $\pi$ and the channel dynamics,  over several trajectories.  We note that the optimization problem in this form is highly complex, because it has to be solved for each possible combination of users drawn from different service classes.

\subsection{Whittle Indexability}

Whittle's approach towards the solution of the above constrained optimization problem is to first relax the action constraint $R$ (corresponding to the available resources) to require it to only hold in expectation, rather than at every time instant.  While such a relaxation appears to violate physical reality, the constraint on available resources will ultimately be reimposed below.  We also observe that the service constraints apply to each individual user, and do not affect each other.    Hence, we may introduce Lagrange multipliers $\lambda, \mu_1, \mu_2,$ corresponding to the resource and service constraints and define separate problems for each user as (we have dropped the notation for user $n,$ since the below pertains to a single user)
\begin{align} \label{eqn: singleUEproblem}
    &V(s_t;\lambda;\mu)  \\ 
    & =\max_{\pi} \mathbb{E}^{\pi} [ \sum_{t=0}^{\infty} \gamma^t [(1+\mu_1) r_1(s_t,a_t) - \mu_2 r_2(s_t,a_t) - \lambda a_t].\nonumber
\end{align}

If the system has the ``indexability'' property, there exists a function called the Whittle index $w(s_t,\mu_1,\mu_2),$ which is independent of the penalty $\lambda,$ and the  optimal policy $\pi$ takes the form of a threshold rule, wherein the action $a_t=1$ if $w(s_t,\mu_1,\mu_2) \geq \lambda,$ and $a_t=0$ otherwise.  In other words, the optimal policy simply requires a comparison between the Whittle index and the penalty  $\lambda$.   At this point, we still require $\lambda,$ which couples all the users together through the total resource constraint $R$.  

Whittle's observation is that the allocation policy can be simplified and a tight resource constraint $R$ can be reimposed by the simple artifice of arranging the Whittle indices of all the users in decreasing order, and awarding the top $R$ users with the high action.  This approach can be shown to be provably optimal as the number of users becomes large. 

The actual indexability property follows the structure that if the optimal action at any given state is $a_t =0$ for a given value of the penalty $\lambda,$  then the optimal action at that state should be $a_t=0$ even if the penalty $\lambda$ is increased.  This property is consistent with a variety of queueing systems under which the queue length or some function of it determines whether it is worth providing service to that queue for a given penalty for providing service $\lambda$.  Our main analytical result is that the problem in \eqref{eqn: singleUEproblem}, which imposes service constraints on a queueing system is indexable.

\begin{theorem} \label{thm: MainTheorem}
    The optimal resource allocation problem defined in equation~\eqref{eqn: singleUEproblem}, for a given $\mu_1$ and $\mu_2$ is indexable.
\end{theorem}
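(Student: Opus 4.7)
My plan is to establish indexability in its standard equivalent form: show that the passive set $\mathcal{P}(\lambda) := \{s : Q_\lambda(s,0) \ge Q_\lambda(s,1)\}$ is monotonically non-decreasing in $\lambda$, where $Q_\lambda(s,a)$ is the action-value function associated with the Bellman equation for $V(s;\lambda;\mu)$. Equivalently, I would show that the passivity advantage $\Delta_\lambda(s) := Q_\lambda(s,0) - Q_\lambda(s,1)$ is non-decreasing in $\lambda$ for every $s$. Writing $\bar r(s,a) := (1+\mu_1)r_1(s,a) - \mu_2 r_2(s,a)$, one has
\begin{equation*}
\Delta_\lambda(s) = \bar r(s,0)-\bar r(s,1) + \lambda + \gamma\bigl(\mathbb{E}_{s'\mid s,0}V_\lambda(s') - \mathbb{E}_{s'\mid s,1}V_\lambda(s')\bigr),
\end{equation*}
so the task reduces to controlling how the continuation values shift with $\lambda$.

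I would proceed in two structural steps. First, I invoke standard MDP theory to observe that $V_\lambda(s)$ is convex and non-increasing in $\lambda$, since it is the pointwise maximum over policies $\pi$ of the affine maps $\lambda \mapsto \mathbb{E}^\pi[\sum_t \gamma^t(\bar r(s_t,a_t) - \lambda a_t)]$. The envelope theorem then yields $\partial V_\lambda(s)/\partial \lambda = -\alpha^*_\lambda(s)$ a.e., where $\alpha^*_\lambda(s) := \mathbb{E}^{\pi^*_\lambda}[\sum_t \gamma^t a_t \mid s_0=s]$ is the expected discounted count of active actions under the optimal policy. Second, I exploit the concrete queueing dynamics---action $a=1$ resets TSLS to zero and reduces the backlog by the channel-dependent service amount, while $a=0$ lets both quantities grow---to couple the two next-state distributions $P(\cdot\mid s,0)$ and $P(\cdot\mid s,1)$ in such a way that $\alpha^*_\lambda$ is stochastically larger along the $a=0$ successor. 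Combining these two facts shows that the bracketed continuation difference in $\Delta_\lambda(s)$ has the sign needed so that differentiating gives $\partial\Delta_\lambda(s)/\partial\lambda \ge 0$, hence the passive set grows with $\lambda$, which is indexability. The extension to the monotone Whittle index function $w(s,\mu_1,\mu_2)$ asserted in the text then follows by taking $w(s,\mu_1,\mu_2) := \inf\{\lambda : s \in \mathcal{P}(\lambda)\}$.

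The main obstacle will be step two. A naive bound $\alpha^*_\lambda \le 1/(1-\gamma)$ is too loose: plugging it directly into the cross-derivative only yields $\partial\Delta_\lambda/\partial\lambda \ge 0$ for $\gamma \le 1/2$. To handle the realistic discount regime, I would set up a finer pathwise coupling that leverages two features of the model: channel quality evolves independently of the scheduling action, and once TSLS and backlog agree, the two coupled trajectories can be merged. This localizes $\mathbb{E}_0\alpha^*_\lambda - \mathbb{E}_1\alpha^*_\lambda$ to a single regeneration cycle and keeps it inside the $1/\gamma$ budget needed for monotonicity. Alternatively, if that coupling proves too delicate, I would fall back on a value-iteration induction, verifying that each iterate $V^{(k)}_\lambda$ inherits the monotonicities claimed for $V_\lambda$ and passing to the contraction limit; this sidesteps the differentiability subtleties but lengthens the argument.
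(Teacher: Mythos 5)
Your reduction of indexability to showing that the passivity advantage $\Delta_\lambda(s)$ is non-decreasing in $\lambda$, and your observation via the envelope theorem that this amounts to bounding $\gamma\bigl(\mathbb{E}_{0}\alpha^*_\lambda - \mathbb{E}_{1}\alpha^*_\lambda\bigr)$ by $1$, is a legitimate and genuinely different framing from the paper's (the paper never introduces the discounted active time $\alpha^*_\lambda$). You have also correctly located the crux: the naive bound $\alpha^*\le 1/(1-\gamma)$ only covers $\gamma\le 1/2$. But the step you propose to close the gap --- a pathwise coupling under which the active and passive trajectories merge within one regeneration cycle --- is exactly where the argument fails as sketched. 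The TSLS component does regenerate once the passive trajectory is eventually served, but the backlog component does not: under the paper's dynamics $s_{t+1}=s_t-r(s_t,a_t)+\text{arrival}$, the two coupled queues differ by $r(1)-r(0)$ after the first step, and this offset persists and can cause the optimal policy to act differently on the two trajectories indefinitely, so there is no single-cycle localization of $\mathbb{E}_0\alpha^*-\mathbb{E}_1\alpha^*$ without further structural input. Step two is not a technicality to be filled in; it is the theorem.

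The structural input missing from your plan is precisely what the paper spends most of its appendix establishing: that $V_\lambda$ has decreasing differences in the queue state ($V(s+1)-V(s)$ non-increasing in $s$, proved by induction on value iteration), hence that the optimal policy is of threshold type in $s$ for every $\lambda$. The paper then obtains the $\lambda$-monotonicity you need not through the envelope theorem but through a Lipschitz-type bound $DV_{\lambda+\delta}(s)-DV_\lambda(s)\le\delta/\gamma$ on value-function differences, again by induction on value iteration; crucially, the threshold structure is what allows that induction to restrict attention to three of the four possible action pairs at the two compared states, and without it the case analysis does not close. Your fallback (``value-iteration induction, verifying that each iterate inherits the claimed monotonicities'') is indeed the paper's route, but as stated it omits the concavity and threshold lemmas that make the induction go through. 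So the proposal as written has a genuine gap on both branches: the coupling on the primary route is unlikely to work in the backlog coordinate, and the fallback route is missing its load-bearing lemmas.
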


The proof of this result is nuanced, due to the need for considering the stochastic evolution of the system state under randomness of the wireless channel as a Markov Decision Process (MDP), and the consequent impact on the optimal action, i.e., we need to characterize the structural properties of the value function of the MDP.  The proof ultimately shows a threshold structure of the constrained scheduling problem, which then allows us to verify the conditions of  indexability.  The full proof is presented in the appendix.

\subsection{Training Whittle Networks}

Computing the Whittle index is often hard due to its dependence on the structure of the underlying Markov Decision Process that relates the resource allocation action to the change in user states and the consequent rewards.  However, there has recently been progress on \emph{learning} the Whittle index using methods from reinforcement learning, entitled NeurWIN~\cite{neurwin}.  We modify the NeurWIN training algorithm to account for the constrained Whittle indexing problem, under which we include both the state, as well as the penalties corresponding to the service guarantee violations, as features input to a neural network.  The high-level idea is to expose the Whittle index neural network to a variety of states and penalty values to make it robust (independent) of the value of $\lambda$.  From the definition of indexability, when we find an optimal policy of threshold form that is applicable for any given $\lambda,$ we have discovered the Whittle index function. We train a total of four Whittle networks in total, each one corresponding to a particular 3GPP service class.  We provide the WINDEX training algorithm in Algorithm~\ref{algo:Windex}.

Algorithm~\ref{algo:Windex} runs over batches, each consisting of a fixed number of episodes. In each batch, a subset of input features consisting of [$\lambda$,\% Throughput violation, \% Regular service violation] is chosen at random, and is kept fixed for the entire duration of the batch.  At each state, we take an action prescribed by the Whittle index, given by $\mathbbm{1}\{f_{\theta}(s_t)>\lambda\}$, where $f_{\theta}$ represents the neural network.  We record the throughput and regularity of service, and move to next state. The reward function is calculated as a convex combination of throughput and the percentage violations.  At the end of each batch, we compute the policy gradient and update the neural network parameters based on the gradient.  We run this algorithm for a large number of episodes on a simulated environment.  We provide the details about the simulator used for the training, and the training parameters in the next section.

\begin{algorithm}[H]
    \caption{WINDEX Training}
\label{algo:Windex}
\begin{algorithmic}[1]
        \STATE \textbf{Input:} A Neural Network with parameters $\theta$, sigmoid parameter $m$, batch size $R$, episode length $T = 5000$, weights $w_r, w_{tpt}, w_{tsls}$ such that $w_r + w_{tpt}+ w_{tsls} = 1$.
        \STATE \textbf{Output:} NN output $f_{\theta}(s)$, where $s$ is the state.
        \FOR {batch $b$}
        \STATE Choose a state $s_0$ with buffer state, cqi, and the percentage violation parameters $v_{tpt},v_{tsls} \in [0,1]$, uniformly random and set cost $\lambda = f_{\theta}(s_0)$ and episodic return $G_e = 0$, policy gradient $h_e \leftarrow 0$.
        \FOR {each episode $e$ in batch $b$}
         \STATE Set the UE queue to initial state chosen randomly.
         \FOR{ each TTI $t = 1,2,\ldots,T$}
         \STATE Select action $a_t = 1$ w.p $\sigma_m(f_{\theta}(s_t)-\lambda)$, and $a_t = 0$ w.p $1-\sigma_m(f_{\theta}(s_t)-\lambda)$.

         \STATE Observe next state $s_{t+1}$, throughput $r(s_t,a_t)$, 
         and form the reward as a convex combination: $w_r r(s_t,a_t) + w_{tpt} v_{tpt} + w_{tsls} v_{tsls} -\lambda a_t $.

         \IF{$a_t = 1$} 
         \STATE $h_e \leftarrow h_e + \nabla_{\theta} \ln(\sigma_m(f_{\theta}(s_t) - \lambda))$
         \ELSE
         \STATE $h_e \leftarrow h_e + \nabla_{\theta} \ln(1-\sigma_m(f_{\theta}(s_t) - \lambda)$
         \ENDIF
         \ENDFOR
         \STATE Add the emprirical discounted reward in episode $e$ to $G_e$
         \ENDFOR
         
         \STATE $\bar{G}_b \leftarrow \bar{G}_b + \frac{G_e}{R}$

        \STATE $L_b \leftarrow $ Learning rate in batch $b$.
        \STATE Update parameters through gradient ascent $\theta \leftarrow \theta + L_b \sum_e (G_e-\bar{G}_b)h_e$.
        \ENDFOR
        \end{algorithmic}
\end{algorithm}

\subsection{Whittle-Index Based Scheduler} 

 The Whittle index represents the priority of each user in terms of its potential to improve their performance metrics, such as throughput or service regularity. Users or service classes with higher Whittle indices must be given precedence in resource allocation decisions, as they are deemed to have a greater impact on improving the overall system performance.  

WINDEX scheduler is given in Algorithm~\ref{algo:WindexScheduler}.
The scheduler obtains the Whittle indices of each user via inference on the appropriate Whittle network corresponding to the user's service class.  The scheduler then assigns high actions in decreasing order of the Whittle indices, i.e., the process has linear complexity in the number of users.  States of users and the penalties due to violating service guarantees are then updated.  As we will show in the next section, although the Whittle networks are trained in a  simulator, they are robust enough to be directly utilized in the real system without modification.


\begin{algorithm}
    \caption{WINDEX Scheduler for $N$ UEs}
\label{algo:WindexScheduler}
\begin{algorithmic}[1]
        \STATE {\textbf{Input:}} Trained Neural Networks for all $N$ UEs, denoted by $f_{\theta}^i$, $\forall i = [1,\ldots, N]$.
        \STATE {\textbf{Initial state:}} Choose a state $s_0^i$, for $i^{th}$ user, which consists of buffer state, cqi, tsls, and percentage violation parameters $v_{tpt},v_{tsls}$ chosen at random.
        \FOR { $m = 0,\ldots,M$}
        \STATE Infer $w^i(s_m^i) \leftarrow f_{\theta}^i (s_m^i)$, $\forall$ $i \in [1,\ldots, N]$.
        \STATE Obtain, for every $i \in [1,\ldots,N]$,
        \begin{align*}
    a_m^i = &
    \begin{cases}
            1, ~~\text{if} ~~i \in \text{argmax}_j w^j(s_m^j)\\
            0, ~~\text{otherwise}.
        \end{cases}
\end{align*}
        \FOR {\text{TTI} $t = 1 \ldots, T$}
         
         \STATE Take action $a_m^i$ and observe throughput, tsls, and move to next state for the $i^{th}$ user, for all $i \in [1,\ldots,N]$.
         \ENDFOR
         \STATE Compute the throughput and tsls violations and update the parameters $v_{tpt}^i$ and $v_{tsls}^i$ using gradient descent on the violations, for all $i \in [1,\ldots,N]$.
         \STATE Set $s_{m+1}^i \leftarrow$  (current buffer length, current CQI, average tsls, $v_{tpt}^i$, $v_{tsls}^i$), for all $i \in [1,\ldots,N]$.
        
         \ENDFOR
        \end{algorithmic}
\end{algorithm}

\subsection{Windex System Implementation}

In this study, we utilize the open-source software library srsRAN \cite{srs} to establish a 5G base station, enabling the connection of software User Equipments (UEs) to our 5G network. The deployment of the real-time Radio Intelligent Controller (RIC) is deployed using the framework \cite{edgeric}, which supports real-time message exchange and control. Our system operates on a 5 MHz bandwidth in Frequency Division Duplexing (FDD) mode, leading to a Transmission Time Interval (TTI) of 1 ms for message exchange and control actions. Figure \ref{fig:system_model} showcases our system model, and summarizes the states received by the RIC from the RAN. Windex is a network function at the MAC layer, after computing the policy as in the figure, the scheduling decision or action to take for each UE is sent back to RAN.

We consider four distinct UE traffic profiles: Ultra-Reliable Low-Latency Communications (URLLC), massive Machine Type Communications (mMTC), enhanced Mobile BroadBand (eMBB), and Extended Reality (XR) traffic. The eMBB traffic is allocated a constant bitrate of 5.8 Mbps, and XR traffic at 6.2 Mbps. In contrast, URLLC and mMTC traffic are characterized by bursty arrivals, averaging bitrates of 2 Mbps and 3.5 Mbps per burst, respectively.

We use several baseline MAC scheduling algorithms to compare the performance of Windex. We list them below:

\noindent\textbf{Max CQI Allocation:}  Here, the high action is given to a UE that has the highest $CQI_i [t],$ where $CQI_i[t]$ is the realized CQI of UE $i$ at time $t.$, all other UEs are given a low action. This approach effectively tries to obtain a large total throughput by prioritizing these UEs that have a large CQI in the current timeslot.

\noindent\textbf{Proportional Fairness Allocation:} Here, the weight  of UE is the ratio between its current CQI and its average CQI, with the idea of prioritizing those UEs that have a good channel realization compared to their average value. The average CQI, denoted $AvgCQI_i[t]$ for UE $i$ is calculated using an exponentially weighted moving average for each UE up to the current time, $t$.  Thus, we have, $w_i[t]=CQI_i[t]/ AvgCQI_i[t].$ The high action is given to a UE that achieves the highest weight.

\noindent\textbf{Max-weight Allocation:}  Here, the weight of a UE is the product of its current CQI and the backlogged bytes in the downlink queue corresponding to that UE.   The max-weight policy is known to be throughput optimal
in that it can achieve the capacity region of the system.  Thus, we have, $w_i[t] = CQI_i[t] B_i[t],$ where $B_i[t]$ is the number of backlogged bytes in the downlink queue of UE $i.$ The high action is given to a UE that achieves the highest weight.

\noindent\textbf{Round Robin Allocation:} Here, the UEs are served in a round robin fashion, with each user being scheduled with a high action after every $K$ time slots where $K$ is the number of users in the system.






\section{Windex Evaluation}
We now provide extensive system evaluation of our RL solution. We first provide details of training a Whittle network for each service class. We then validate the trained models on a real-world system supporting multiple user equipment with hetergeneous service classes.

\subsection{Training Windex Neural Networks}
We consider four service classes. The service classes eMBB and XR have periodic traffic, where as URLLC and MMTC have arrivals that follow a bursty pattern.  As indicated earlier in Section~\ref{sec:services}, each service has different constraints on throughput and TSLS that they must guarantee.  
We train a separate Windex network for each of these traffic patterns on a simulator. The simulator consists of a queue that mimics the UE downlink queue at the base station.  The arrivals into the queue are according to the traffic pattern of the service class. We utilize several channel traces, which were collected from a real-world RAN supporting stationary and mobile UEs. The throughput is calculated based on the number of RBs allocated and the channel quality index (CQI) obtained from the channel trace employed. We model this as a Gaussian distribution with mean and standard deviation obtained from the CQI map and the number of RBGs allocated.

We train each network over $20000$ episodes. Each episode is further subdivided into $5000$ TTIs, each TTI being of $1$ms duration. At each TTI, bytes arrive into the queue according to the traffic pattern. Further, at each TTI, an action decision is made, and the system moves to the next state, yielding observations in terms of throughput and TSLS. The input to the Windex neural network is a tuple [Buffer length, CQI, TSLS, \% Throughput violation, \% TSLS violation]. After a batch of $20$ episodes, the training algorithm is updated and the training continues. We used several hyper parameters to train the network, these are listed in Table~\ref{tab: Hyperparameters}. We employ the Adam optimizer, and a small number of hidden layers.

We provide a snapshot of training for the UEs in Figure~\ref{fig: windex_training}. The figure shows mean throughput in Mbps obtained during the training process for all the service classes, averaged over $10$ seeds.

\begin{figure}[htbp]
\vspace{-0.1in}
\begin{center}
\includegraphics[width=0.9\columnwidth]{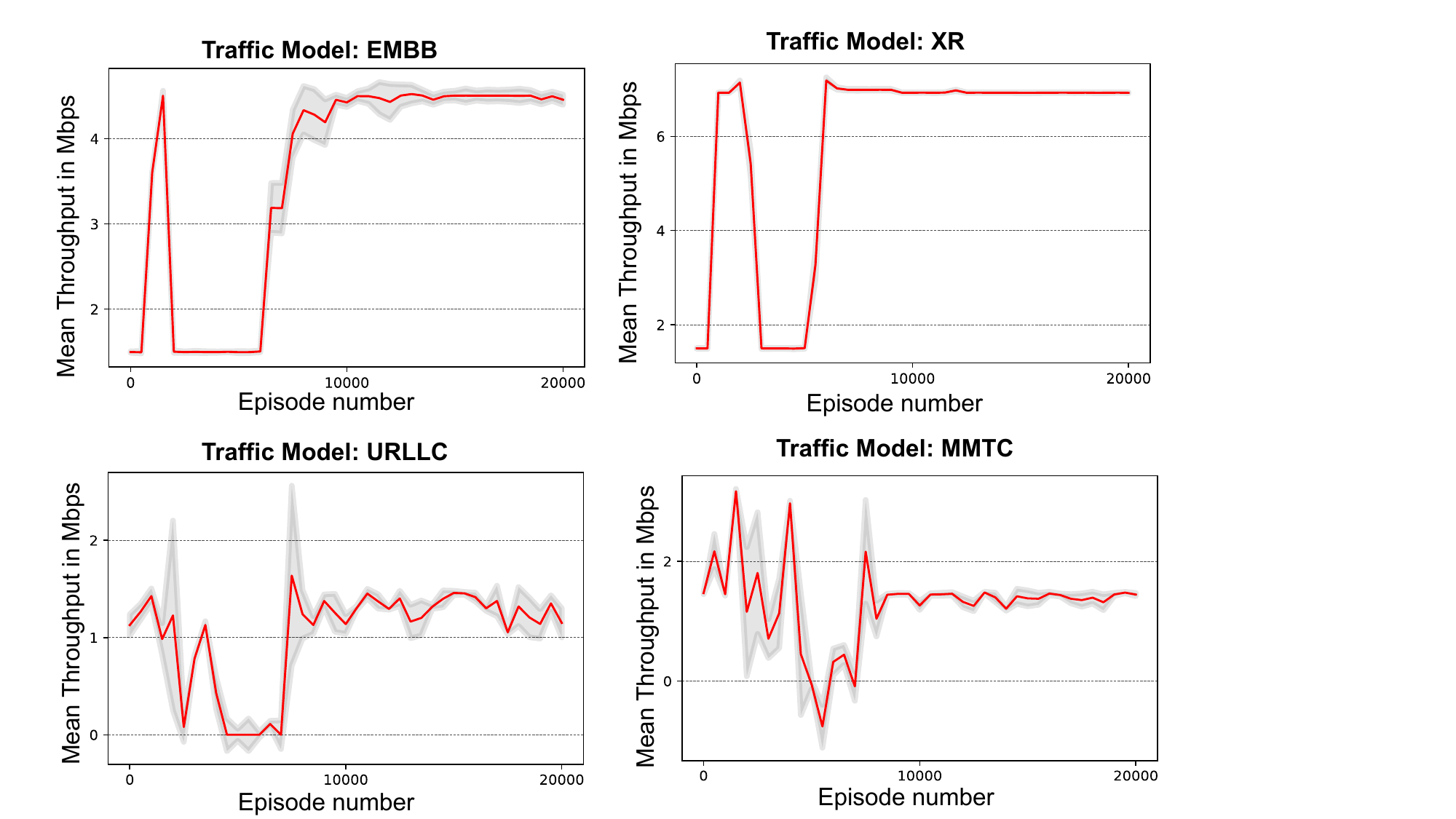}
\caption{Windex Training for all service classes}
\label{fig: windex_training}
\end{center}
\end{figure}

We employ these trained models to evaluate Windex in multiple scenarios arise in the context of service provisioning for different combinations of service classes.

\begin{table}[t]
\label{tab:compare}
\caption{Hyper parameters for training Windex}
   \begin{center}
\begin{tabular}{|l | l|}
\hline
Hyper parameter & Details\\
\hline
Learning rate &$0.1$ for eMBB, $0.75$ for URLLC \\
&$0.1$ for XR, $0.25$ for mMTC\\
\hline
Hidden Layers &$(32,8)$\\
\hline
Batch Size &$20$\\
\hline
Optimizer &Adam\\
\hline
$(w_r,w_{tpt},w_{tsls})$ & $ (0.2,0.6,0.2)$ for eMBB and XR,\\
&$(0.2,0.2,0.6)$ for URLLC and mMTC\\
\hline
\end{tabular}
\end{center}
\label{tab: Hyperparameters}
\end{table}

\subsection{Evaluating Windex in a heterogeneous application environment}
In this section, we provide extensive evaluations for Windex as a real-time MAC scheduling algorithm. We consider various combinations of traffic flows on the 5G network environment, and provide evaluations of our algorithm for a variety of channel traces collected from real world over-the-air experiments (mobile users, car driving, drone and indoor robots). The scenarios that we evaluate on are presented in Table \ref{tab:scenarios}. Figure \ref{fig:windex_evaluation} provides an overview of Windex performance over all the scenarios. Further, Figure \ref{fig:windex_channel_evaluation} provides evaluations for Windex on multiple real world channel traces. The baselines that we compare Windex against are traditional schedulers, such as max weight (prioritizes users with largest product of buffer and CQI), max CQI (prioritizes users with largest CQI), round robin (periodically provides service to each user) and proportional fairness (compares current CQI to average CQI of user, and prioritizes based on this ratio).  

\begin{table}[htbp]
\caption{Summary of all scenarios}
\vspace{-0.1in}
\centering
\resizebox{\columnwidth}{!}{
\begin{tabular}{|l |c| c|}
\hline
Scenario & Description & Total RBGs \\
\hline
\multicolumn{3}{|c|}{Scenarios with over-the-air channel traces} \\
\hline
Scenario 1& 1 XR, 1 eMBB, 1 URLLC  & 25\\
\hline
Scenario 2 &  2 eMBB and 1 XR  & 25 \\
\hline
Scenario 3& 2 eMBB and 1 URLLC  & 25 \\
\hline
\multicolumn{3}{|c|}{Scenarios in Simulation} \\
\hline
Scenario 4 & 2 eMBB, 2 URLLC and 2 XR  & 17 \\
\hline
Scenario 5 & 5 eMBB, 5 URLLC and 5 XR & 47 \\
\hline
Scenario 6 & 10 eMBB, 10 URLLC and 10 XR & 98 \\
\hline
\end{tabular}}
\label{tab:scenarios}
\end{table}



\begin{figure}[htbp]
\vspace{-0.1in}
\begin{center}
\includegraphics[width=\columnwidth]{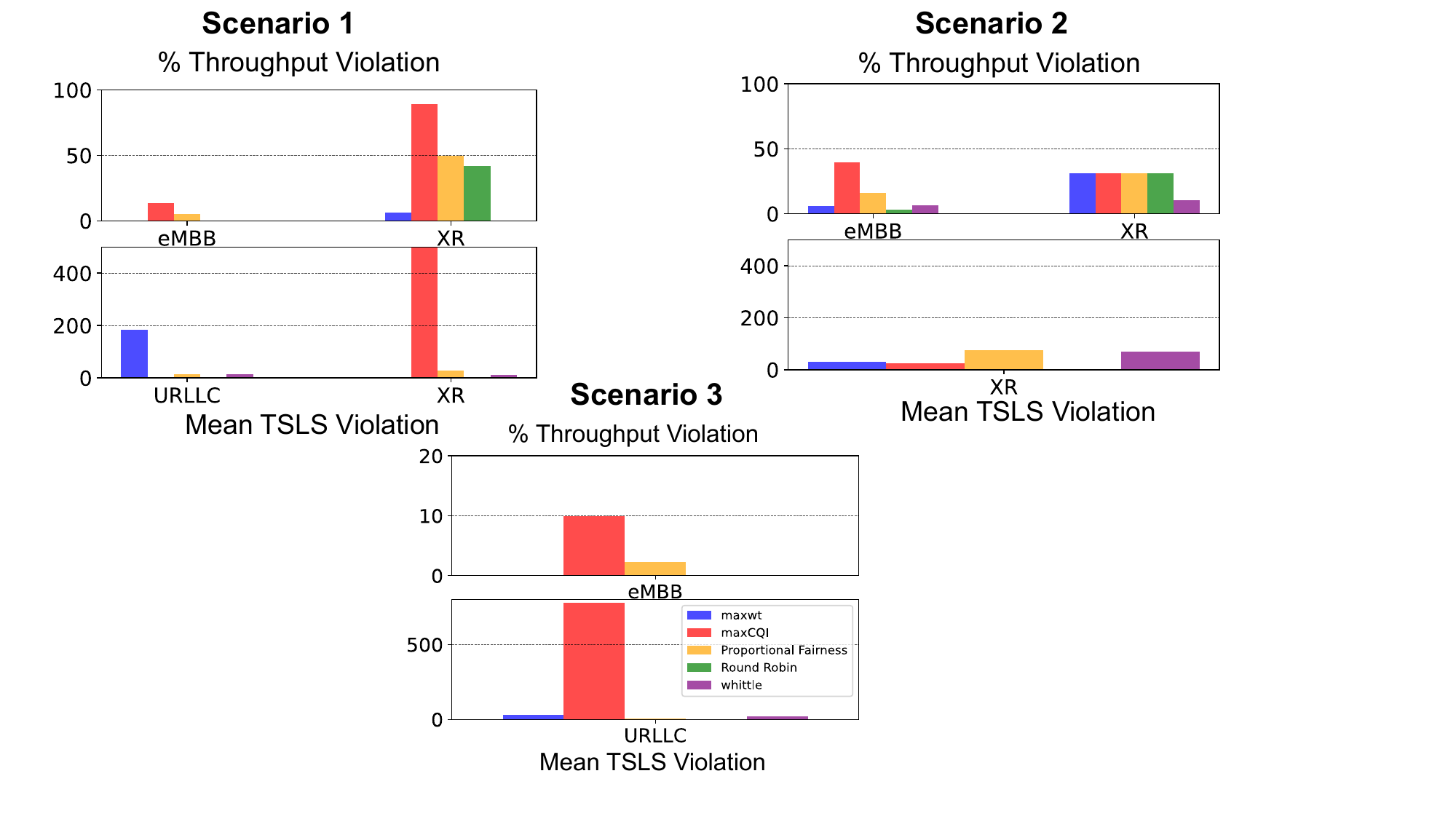}
\caption{In all the scenarios with similar over-the-air channel traces, Windex achieves good performance in terms of throughput and tsls violations for all the service models.}
\label{fig:windex_evaluation} 
\end{center}
\vspace{-0.3in}
\end{figure}

\begin{figure*}[htbp]
\vspace{-0.1in}
\begin{center}
\includegraphics[width=\linewidth]{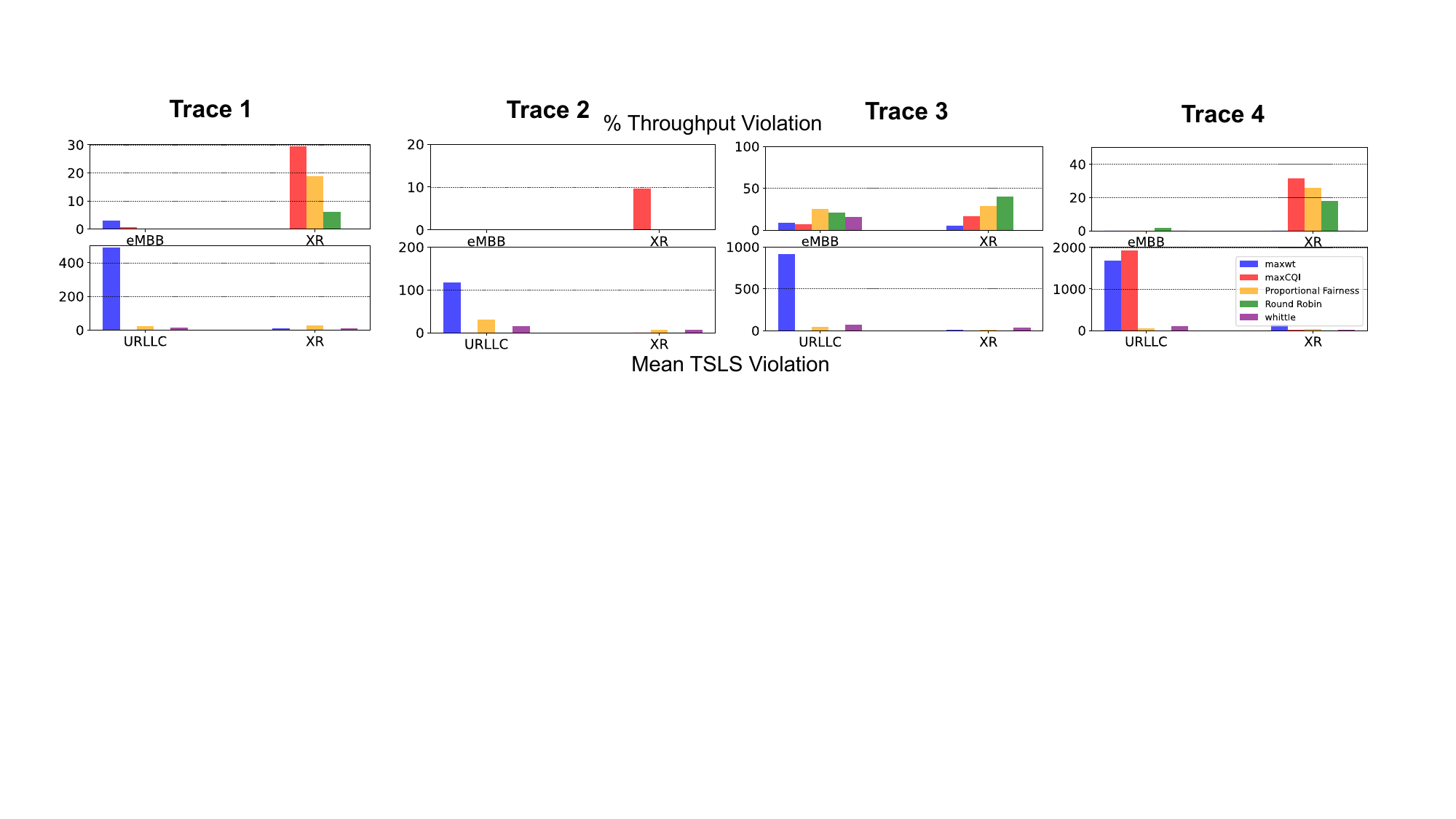}
\caption{In various over the air channel traces, Windex has better overall performance in terms of throughput and tsls constraint violations.}
\label{fig:windex_channel_evaluation} 
\end{center}
\end{figure*}

Windex is designed towards satisfying the following key requirements:

\textbf{$\blacksquare$ Providing service guarantees per user}: We first answer the fundamental question: \emph{How good is Windex at providing per-user service guarantees?}

We evaluate Windex on a real-world system (with an emulated channel drawn from our channel traces) with four UEs, each running different traffic profiles and competing for thirteen resource blocks in the emulator. The action could be either to allocate $9$ or $2$ or $0$ RBs; this corresponds to a channel bandwidth of 5MHz.  We consider several scenarios where we pit a combination of eMBB, XR and URLLC UEs against each other in a resource constrained environment. Scenario 1 corresponds to a combination of one eMBB, one URLLC, and one XR UEs: This scenario depicts the case where there are $3$ UEs in the system, two of those with periodic flows and one with bursty pattern. Scenario 2 shows a combination of two eMBB, and one XR UEs: Every UE has periodic traffic, but with different arrival rates. Lastly, Scenario 3 presents a combination of two eMBB, and one XR UEs: This scenario depicts a more complex case with two periodic flows and one bursty flows.  Note that mMTC service guarantees are very easy to satisfy, since it only has a low throughput and large latency constraints.  So we do not include mMTC traffic in the results presented here.   It is clear from Figure \ref{fig:windex_evaluation} that Windex is able to minimize violations in service guarantees in terms of throughput and latency constraints. These scenarios have been evaluated on a case where the goal is to provide 90\% of the high action throughput to each flow, and a tight tsls constraints for the urllc and xr traffic.






\textbf{$\blacksquare$ Robustness to channel dynamics:} Here, we try to answer the key question: \emph{Are the trained Whittle networks generalizable to various channel environments?}

We run experiments on a variety of channel traces (mixtures of channel evolution drawn from our channel data sets), and show the robustness of our solution. The CQI traces are input to our system for the purpose of evaluating Windex on real-world dynamic channels.  The scenarios that we create from these traces are summarized in Table \ref{tab:traces}. We consider the traffic from Scenario $1$ in Table~\ref{tab:scenarios} to run these experiments.  Figure \ref{fig:windex_channel_evaluation} summarizes our observations when Windex is evaluated on these heterogeneous channel traces.  We notice that Windex prioritizes the needy UEs and hence its violations are low. The important take away here is that Windex is robust to real world channel dynamics, even though it was trained on synthetic traces in a simulator environment.


\begin{table}[htbp]
\caption{Summary of all Channel traces}
\vspace{-0.1in}
\centering
\begin{tabular}{|c |c|}
\hline
Scenario & Description \\
\hline
Trace 1& UE on a Flying Drone\\
\hline
Trace 2 & UE on Moving Indoor robots\\
\hline
Trace 3 & UE on a rotating table \\
\hline
Trace 4 & A mix of traces 1, 2 and 3 \\
\hline
\end{tabular}
\label{tab:traces}
\end{table}

 \textbf{$\blacksquare$ Algorithm performance in a scaled system:} After having evaluated Windex on a variety of scenarios and channel traces, the question arises: \emph{Can the algorithm scale to accommodate a large number of users?}

We show that our solution is highly scalable as we evaluate on a system with high number of UEs as in scenarios 4, 5 and 6 in Table \ref{tab:scenarios}. Figure \ref{fig:windex_evaluation} shows that even in a system with a large number of UEs, Windex is able to provide the best performance in terms of minimizing service violations. The evaluations have been shown with Windex operating to provide 70\% of the high action throughput to each flow.  The channel conditions in these scenarios follow a synthetically generated random walk trace.  For the system with a higher number of UEs, we also scaled the bandwidth accordingly, maintaining the resource constrained environment. Scenario 4 is a system of 6 UEs with 17 resource block groups while Scenario 5 is a system of 15 UEs with 47 resource block groups and Scenario 6 is a system of 30 UEs with 98 resource block groups.  We see that  Windex manages to achieve low throughput and TSLS violations, while other algorithms fail to achieve all the constraints simultaneously. 

\begin{figure}[htbp]
\vspace{-0.1in}
\begin{center}
\includegraphics[width=\columnwidth]{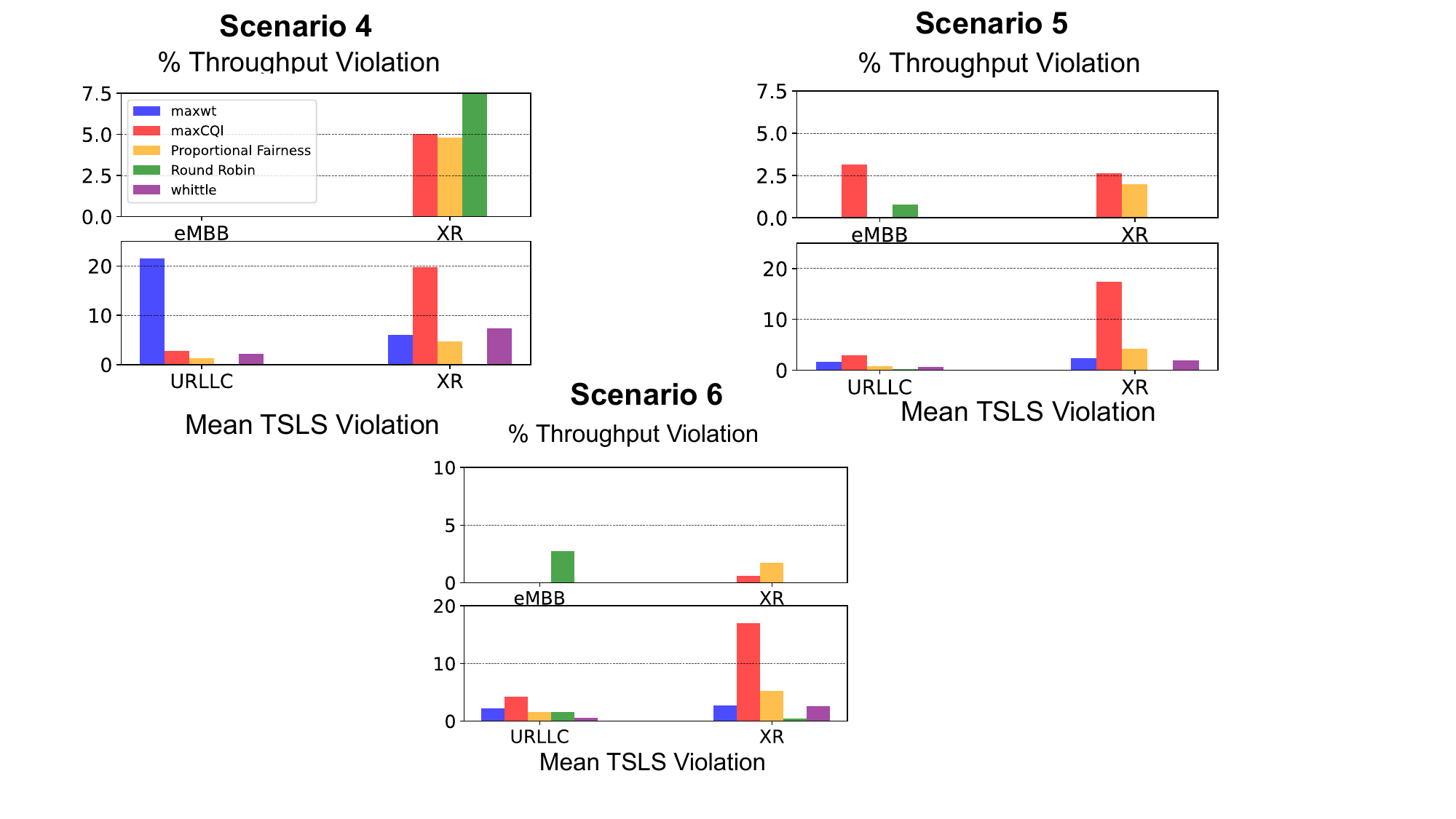}
\caption{Windex Scalability: Windex performs well when the number of UEs scaled up.}
\label{fig:windex_scalability} 
\end{center}
\end{figure}



\textbf{$\blacksquare$ Algorithm feasibility for real time network functions:} After having showing the ability of Windex to provide per-user service guarantees at scale, it is critical to answer: \emph{Is the algorithm computationally feasible for real-time network operations?}

In order to be feasible for real time network operations, Windex has to be lightweight and computationally lean in order to be able to compute the indices for all users in less than 500 $\mu$s. We measured Windex compute times on a general purpose CPU and found that it takes about 10-20 $\mu$s for each user.
We scaled up the number of users to observe the impact on computation time, and show the inference times in Figure~\ref{fig:windex_computation} for the case of a generic RL agent trained using PPO (left) versus Windex (right).  We see that  that Windex computation is complete in less than 150$\mu$s for 20 UEs using 2 threads.   So scaling up to multiple UEs with different combinations of service requests is simply a matter of adding more threads for fast computation.   However, the generic RL-trained model suffers two issues. First, it is  combinatorially infeasible, i.e., we need to train a different model for each possible combination of services desired across the users.  Second, the 
the generic deep neural network that services the users takes about 300-400 $\mu$s in compute time, i.e., even if we somehow have models trained for each combination of service needs, the model inference times are prohibitive.  Thus, the value of Windex lies in simple computations per-user and linear complexity in scaling. 

\begin{figure}[htbp]
\vspace{-0.1in}
\begin{center}
\includegraphics[width=\columnwidth]{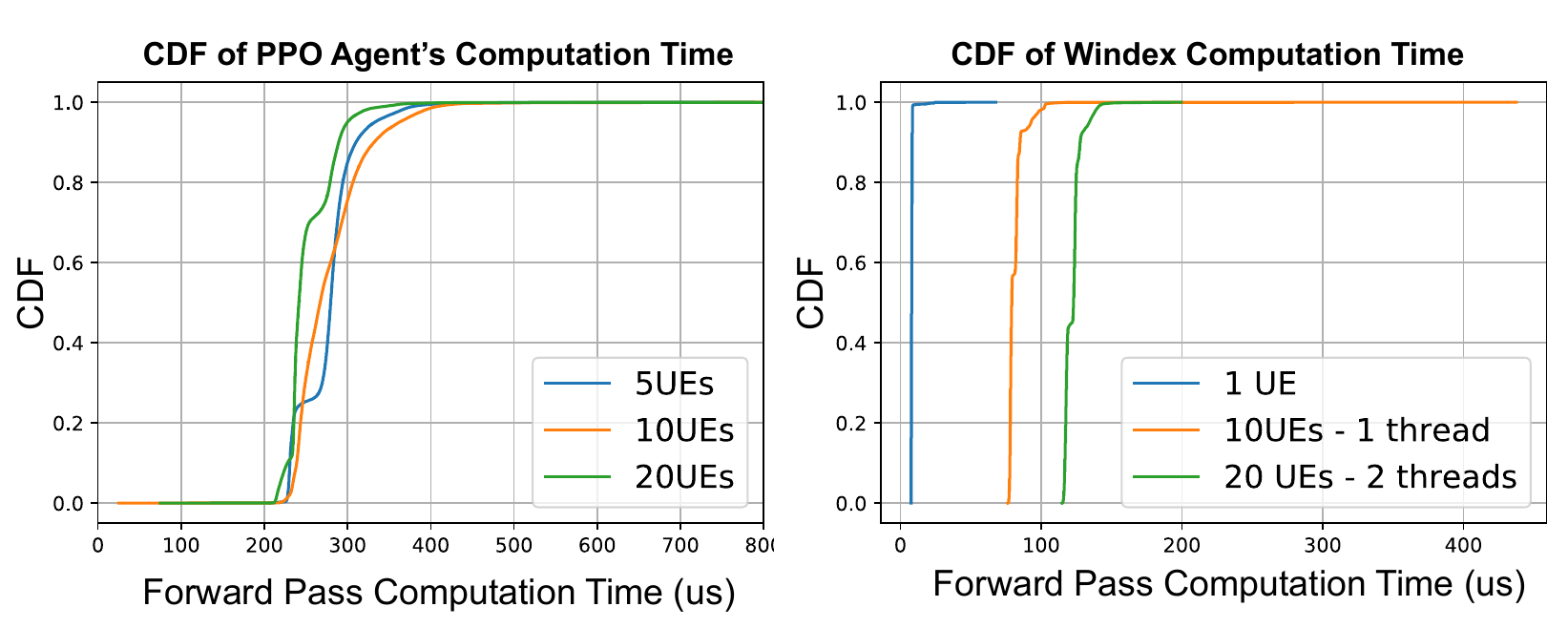}
\caption{Windex Computation}
\label{fig:windex_computation} 
\end{center}
\end{figure}
\section{Windex vs Network Slicing}
\begin{figure*}[htbp]
\vspace{-0.1in}
\begin{center}
\includegraphics[width=\linewidth]{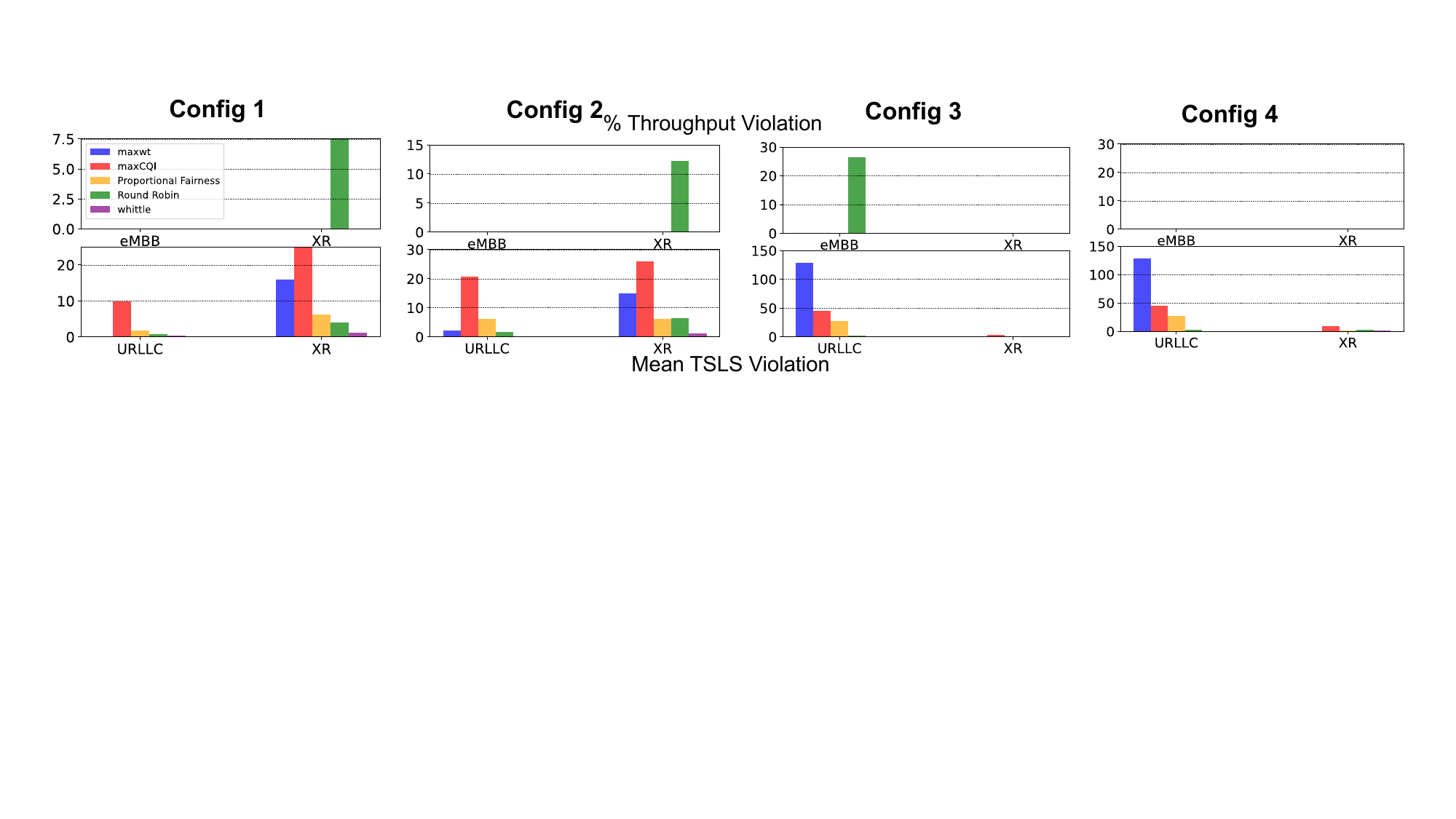}
\caption{Windex versus slicing: Windex under no slicing performs better in terms of throughput and tsls constraint violations in all of the configurations against the traditional algorithms in slicing.}
\label{fig:windex_slicing} 
\end{center}
\end{figure*}
Our hypothesis that the Windex framework, which attempts to ensure service guarantees per-user, enables higher user satisfaction in terms of service violation as opposed to resource slicing on a a per-service basis.  We consider four configurations of network slicing, whereby we distribute the available RBGs among the contending flows per slice, and use a standard scheduler within each slice.   Note that we could actually use Windex as the scheduler with each slice, but that would be less efficient than simply applying Windex on the entire resources.  We show the average violations of each service class under the network slicing scenario versus when all users of a service class occupy the same slice, versus operating under the Windex scheduling framework. Our evaluations are on a system with 99 resource blocks groups (RBGs) serving 10 XR traffic flows, 10 URLLC traffic flows and 10 EMBB traffic flows. Based on the division of $99$ resource block groups into slices, we consider several combinations of configurations. These are given in the table below.

 \begin{table}[htbp]
\caption{Summary of all configurations in slicing}
\vspace{-0.1in}
\centering
\resizebox{\columnwidth}{!}{
\begin{tabular}{|l |c| c| c|}
\hline
 & eMBB \# RBGs & XR \# RBGs & URLLC \# RBGs \\
\hline

Config 1& 33 & 33 & 33\\
\hline
Config 2 &  39 & 39  & 21 \\
\hline
Config 3& 36 & 54  & 9 \\
\hline
Config 4 & 39 & 51 & 9\\
\hline
\end{tabular}}
\label{tab:slicing}
\end{table}
  
\begin{itemize}
    \item Configuration $1$ gives equal allocation to all the slices.
    \item In Config $2$, the allocation to URLLC slice has slightly decreased. Similarly, the allocation to XR and eMBB slices have increased slightly.
    \item In Configs $3$ and $4$, we consider two combinations where the URLLC slice's allocation has decreased further, followed by the eMBB slice, with XR slice having the hightest allocation.
\end{itemize}
As mentioned previously, we evaluate the performance of standard algorithms on the slices, and Windex algorithm, without the consideration of slicing.
Figure \ref{fig:windex_slicing} summarizes our evaluations for the above four configurations of network slicing under various MAC schedulers in each slice. We observe that while the round robin scheduler performs optimally in terms of regularity in service guarantees, it fails to guarantee the required throughput. Although it might seem to a great choice with low latency requirements such as URLLC flows, demanding flows such as an XR flow will not be satisfactorily served by the round robin scheduling. On the other hand, schedulers such as max weight or max CQI, which are traditionally known to provide maximum system throughput by serving users with the best channel and higher data queues, fail to provide a service guarantees per-user. Users with bad channel conditions are heavily starved under these scheduling schemes. Proportional fairness is another widely used scheduling scheme. While it guarantees throughput fairness among all connected users in the system, it  also fails to guarantee service regularity constraints. Hence, we reiterate the notion that resource fairness among all users may no longer be relevant in NextG networks where the heterogeneous applications having their specific set of requirements.  Thus,  differential resource prioritization is key to guarantee a satisfactory Quality of Experience for each end user application. By integrating this application or service-based prioritization in resource allocation at the MAC layer, Windex can deliver the true potential of NextG cellular networks in satisfying user experiences for a wide range of applications.



\section{Discussion and Limitations}
The evolution of NextG cellular networks necessitates a shift towards tailored service guarantees for diverse applications like extended reality (XR), challenging traditional resource allocation approaches. We presented Windex, a novel system leveraging the notion of indexability and Whittle indices to efficiently allocate resources, ensuring individual user requirements are met amidst varying network conditions. Integrated into the EdgeRIC platform, Windex offers scalable, real-time and precise resource allocation based on relative priority weights.  Extensive experimentation demonstrates Windex's significant performance improvements in attaining service-level guarantees over conventional scheduling and resourcing slicing methods, thus making strides towards efficient and reliable NextG cellular networks.

While Windex is a promising solution for resource allocation with service guarantees, there are some limitations to consider. The practical implementation of Windex in real-world cellular networks may face challenges related to the generalizability of Windex across diverse network architectures and deployment scenarios with dynamic applications remains to be thoroughly evaluated. Finally, the computational overhead associated with computing Whittle indices for a large number of users in real-time might require a more sparse computation approach. Given that most users indices will not change at each ms, future work can aim at developing a compressive sensing approach to only computing indices for users that have experienced significant changes since the last computation.  Such an approach would likely yield the benefits of determining relative priority in the Whittle manner, while not taxing compute resources unduly when confronted with thousands of users. 



\noindent\textbf{Ethical concerns:} Does not raise any ethical issues.

\bibliographystyle{plain}
\bibliography{reference}
\newpage
\section{Appendix}

In this section, we first consider the problem of maximizing the throughput subjected to throughput constraints, and the state $s$ to be queue length, for simplicity of proof. It is straightforward to show that the problem of minimizing TSLS while satisfying TSLS constraint is indexable, by following a similar approach. Hence, we focus on the first problem in this proof.


By the abuse of notation, denote the realized reward $r(s,a)$, and it is given by, 
\begin{align*}
    r(s,a) &=
    \begin{cases}
        r(1), ~~\text{if}~~ s > 0 ~~\text{and}~~ a = 1\\
        r(0), ~~\text{if}~~ s>0 ~~\text{and}~~ a = 0\\
        0, ~~\text{otherwise}
    \end{cases}
\end{align*}

We also assume that $r(0) < r(1)$, and that both are bounded in $[0,1]$, and the dynamics $p(s'|s,a)$ are defined by the state evolution,
\begin{align*}
    s_{t+1} = &
\begin{cases}
 s_t - r(s_t,a_t), ~~\text{w.p}~~ 1-\beta,  \\
s_t - r(s_t,a_t) + 1, ~~\text{w.p}~~ \beta.
\end{cases}
\end{align*}
where $a_t$ is the action taken at time $t$.
When the context is clear, we simply write $V(s)$ for the value function.

We note that the proofs in this section follow similarly if we include the dynamics of the channel. The channel dynamics effect the realized instantaneous throughput, and hence the next state.


Writing a dynamic programming equation for the problem we aim to solve, with $\mu_r$ as the penalty,
\begin{align} \label{eqn: simplifiedproblem}
    V(s;\lambda;\mu) = \max_{a \in \{0,1\}} \{(1+\mu_r)r(s,a) - \lambda a + \gamma \mathbb{E}^{\pi^*} V(.;\lambda;\mu)\}
\end{align}

We first prove the following lemma which gives the if and only if condition for optimal action to be $0$ and vice versa.
\begin{lemma} \label{lem: ifflemma}
    For a given $\lambda$ and $\mu$, the optimal action is $0$ if and only if $(1-\beta) (V(s-r(1)) - V(s-r(0)) + \beta (V(s-r(1)+1) - V(s-r(0)+1)) \leq \frac{\lambda(1+\mu_r)[r(1)-r(0)]}{ \gamma}$, and when the state is $0$, then the optimal action is to play $0$ always.
\end{lemma}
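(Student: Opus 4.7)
The plan is to derive the stated threshold directly from the Bellman equation~\eqref{eqn: simplifiedproblem} by comparing the two action-values side by side. First I would define the state-action value $Q(s,a;\lambda,\mu) := (1+\mu_r)\,r(s,a) - \lambda a + \gamma\,\mathbb{E}[V(s';\lambda,\mu)\mid s,a]$, so that the maximum in~\eqref{eqn: simplifiedproblem} is just $\max\{Q(s,0),Q(s,1)\}$. Using the dynamics as stated, the conditional expectation over $s'$ splits as a $(1-\beta)$ weight on $V(s-r(s,a))$ plus a $\beta$ weight on $V(s-r(s,a)+1)$, so each $Q$-value has a closed form in terms of $V$ at two nearby arguments.

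Second, I would specialize to $s>0$, where $r(s,0)=r(0)$ and $r(s,1)=r(1)$. By definition, the optimal action is $0$ if and only if $Q(s,0)\geq Q(s,1)$. Forming the difference $Q(s,0)-Q(s,1)$, the immediate-reward part contributes $(1+\mu_r)(r(0)-r(1))+\lambda$ and the continuation-value part contributes $\gamma$ times $(1-\beta)(V(s-r(0))-V(s-r(1)))+\beta(V(s-r(0)+1)-V(s-r(1)+1))$. Requiring this difference to be nonnegative, then isolating the $V$-differences on one side and the $\lambda$ and reward-gap terms on the other (dividing through by $\gamma$), yields exactly the stated threshold inequality, up to a sign correction in the numerator on the right-hand side that appears to be a typographical slip.

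Third, for the boundary case $s=0$, I would observe from the definition of $r(s,a)$ that $r(0,0)=r(0,1)=0$, so the next-state distribution $s'\in\{0,1\}$ under the given dynamics is \emph{independent} of the action chosen. Consequently the continuation-value terms in $Q(0,0)$ and $Q(0,1)$ cancel exactly, and $Q(0,0)-Q(0,1)=\lambda$. In the range $\lambda\geq 0$ relevant for Whittle indexing this is nonnegative, so $a=0$ is optimal at $s=0$ (with ties broken in favor of the passive action, as is conventional in the indexability literature).

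I do not foresee a major obstacle: the argument is essentially a rearrangement of the Bellman equation, and the value function is well-defined by standard contraction-mapping arguments since $\gamma<1$ and the per-step rewards lie in a bounded interval (recall $r(0),r(1)\in[0,1]$ and $a\in\{0,1\}$). The only subtle bookkeeping is to state the tie-breaking convention explicitly so that the ``if and only if'' is literal rather than modulo a measure-zero set of $\lambda$, and to note that the argument goes through unchanged when the channel dynamics are included, since those only reshape the conditional distribution of the realized reward and the resulting next state without altering the structure of the $Q$-value decomposition above.
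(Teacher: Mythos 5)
Your proposal is correct and follows essentially the same route as the paper's own proof: both compare the two action-values from the Bellman equation for $s>0$ and rearrange to isolate the value-function differences, and both handle $s=0$ by noting that the continuation terms cancel so the comparison reduces to $\lambda \ge 0$. You also correctly identify that the right-hand side in the lemma statement should read $\frac{\lambda-(1+\mu_r)[r(1)-r(0)]}{\gamma}$ (a typographical slip in the paper), which is exactly what the paper's displayed derivation yields.
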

\begin{proof}
    Writing the dyamic programming equation for $V(s)$,
    \[
    V(s) = \max_{a \in \{0,1\}} \left[ (1+\mu_r) r(a) - \lambda a + \gamma \sum_{s'} P(s|s,a) V(s')\right].
    \]
    The optimal action in state $s>0$ is $0$ if and only if
    \begin{align*}
        &(1+\mu_r) r(0) + \gamma \left[ (1-\beta) V(s-r(0)) + \beta V(s-r(0)+1)\right] \ge \\
        & (1+\mu_r) r(1)  - \lambda + \gamma \left[ (1-\beta) V(s-r(1) + \beta V(s-r(1)+1)\right] 
 \end{align*}       
  which means
  \begin{align*}
  &\frac{\lambda(1+\mu_r)[r(1)-r(0)]}{ \gamma} \\
  &\ge (1-\beta) (V(s-r(1)) - V(s-r(0)))+ \\
  &\hspace{5mm}\beta (V(s-r(1)+1) - V(s-r(0)+1)).
    \end{align*}
    When $s = 0$, the optimal action is:
    \begin{align*}
        &\gamma \left[ (1-\beta) V(0) + \beta V(1)\right] \ge \\
        &  - \lambda + \gamma \left[ (1-\beta) V(0) + \beta V(1)\right] 
    \end{align*}
    which implies that $\lambda \ge 0$, which is always true. Hence, the optimal action is to take action $0$ all the time.
\end{proof}
This lemma suggests that we can focus on the case where $s>0$, since for any cost $\lambda \ge 0$, $s=0$ belongs to the inactive set. Hence, from now on, without loss of generality, we assume that $s>0$.

We next prove the following lemma, which shows decreasing differences in value functions. This lemma is the first step towards proving indexability.

\begin{lemma} \label{lem: nonincrlemma}
    For a given $\lambda$ and $\mu_r$, $V(s+1)-V(s)$ is non-increasing in $s$.
\end{lemma}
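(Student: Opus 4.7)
The plan is to prove this by value iteration and induction, then pass to the limit. Define $V_0 \equiv 0$ and iterate $V_{k+1} = T V_k$, where $T$ is the Bellman operator associated with \eqref{eqn: simplifiedproblem}. Since rewards are bounded and $\gamma < 1$, $T$ is a $\gamma$-contraction in sup-norm, so $V_k \to V$ uniformly, and the non-increasing-differences property will survive passage to the limit. It therefore suffices to show the inductive step: if $V_k(s+1) - V_k(s)$ is non-increasing in $s$, then so is $V_{k+1}(s+1) - V_{k+1}(s)$; the base case for $V_0 \equiv 0$ is immediate.

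For the inductive step, set
\begin{align*}
Q_k(s,a) &= (1+\mu_r)\, r(a) - \lambda a \\
&\quad + \gamma\bigl[(1-\beta) V_k(s - r(a)) + \beta V_k(s - r(a) + 1)\bigr].
\end{align*}
For each fixed $a$, $Q_k(\cdot, a)$ is a nonnegative weighted average of unit shifts of $V_k$, and so inherits non-increasing differences in $s$. The key structural observation I will exploit is that the two $Q$-values differ only by a unit state shift and an additive constant: writing $c := (1 + \mu_r)(r(1) - r(0)) - \lambda$, one checks directly from the definitions that $Q_k(s, 1) = c + Q_k(s - 1, 0)$. Hence for $s > 0$,
\begin{align*}
V_{k+1}(s) = \max\bigl\{Q_k(s, 0),\; c + Q_k(s - 1, 0)\bigr\}.
\end{align*}

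Because $Q_k(\cdot, 0)$ has non-increasing differences, the gap $Q_k(s, 0) - Q_k(s - 1, 0)$ is itself non-increasing in $s$, so there exists a threshold $\tau$ with action $0$ optimal for $s \le \tau$ and action $1$ optimal for $s > \tau$. On each of the two regions, $V_{k+1}$ coincides with a translate of the concave function $Q_k(\cdot, 0)$, so its one-step differences are non-increasing within each region. At the boundary, direct computation gives $V_{k+1}(\tau + 1) - V_{k+1}(\tau) = c$, and by the definition of $\tau$ this value is bracketed above by $Q_k(\tau, 0) - Q_k(\tau - 1, 0) \ge c$ and strictly below by $Q_k(\tau + 1, 0) - Q_k(\tau, 0) < c$. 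Stitching the three pieces shows that the full sequence of one-step differences of $V_{k+1}$ is non-increasing in $s$. The state $s = 0$ is handled separately by Lemma~\ref{lem: ifflemma}, which forces action $0$ there.

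The main obstacle is exactly this boundary step, because the pointwise max of two discretely concave functions is not concave in general. What saves the argument here is the special identity $Q_k(s, 1) = c + Q_k(s - 1, 0)$: the two candidates inside the $\max$ are the \emph{same} function, merely shifted in its argument and offset by a constant. This turns the max into a piecewise-concave function glued at a single threshold, and the glue increment $c$ is automatically sandwiched by the Q-differences by the very criterion that defines the threshold. Once each $V_k$ is discretely concave, uniform convergence $V_k \to V$ extends the inequality $V(s+1) - V(s) \ge V(s+2) - V(s+1)$ to the fixed point $V$, establishing the lemma.
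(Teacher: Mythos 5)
Your overall skeleton (value iteration, induction on $k$, pass to the limit by uniform convergence) is the same as the paper's, but the inductive step rests on a structural identity that is false in the paper's setting. You claim $Q_k(s,1)=c+Q_k(s-1,0)$ with $c=(1+\mu_r)(r(1)-r(0))-\lambda$. Writing out both sides, this requires $V_k(s-r(1))=V_k(s-1-r(0))$ and $V_k(s-r(1)+1)=V_k(s-r(0))$ for all $s$, i.e.\ $r(1)=r(0)+1$. Since $r(0)<r(1)$ and both lie in $[0,1]$, this holds only in the degenerate case $r(0)=0,\ r(1)=1$; the paper's argument explicitly works in the regime $r(1)<r(0)+1$ (its Case~3 of this very lemma invokes that strict inequality). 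The correct relation is $Q_k(s,1)=c+Q_k\bigl(s-(r(1)-r(0)),0\bigr)$, a shift by the fractional amount $\Delta=r(1)-r(0)\in(0,1)$, not by $1$.

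This is not a cosmetic repair. Your entire boundary argument --- that $V_{k+1}$ is the max of a function and its \emph{unit} translate plus a constant, that $V_{k+1}(\tau+1)-V_{k+1}(\tau)=c$ exactly, and that this value is sandwiched by the adjacent $Q$-differences via the threshold's defining inequality --- collapses when the shift is $\Delta<1$: the difference across the threshold becomes $c+Q_k(\tau+1-\Delta,0)-Q_k(\tau,0)$, and the inductive hypothesis (non-increasing \emph{unit} differences $V_k(s+1)-V_k(s)$) gives you no control over differences of $V_k$ over increments of size $\Delta$ or $1-\Delta$. You would need a genuinely stronger hypothesis (e.g.\ concavity of $V_k$ on the reals), at which point you are back to the ``max of two concave functions need not be concave'' obstacle you yourself identify. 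The paper avoids the translate identity entirely: it writes $2V_{k+1}(s)\ge V_{k+1}(s;a_1)+V_{k+1}(s;a_2)$ for the optimal actions $a_1,a_2$ at $s+1$ and $(s-1)^+$, reduces the claim to $DV_{k+1}((s-1)^+;a_2)\ge DV_{k+1}(s;a_1)$, and checks the resulting cases using only the unit-difference induction hypothesis together with the ordering $(s-1)^+-r(1)<s-r(1)<s-r(0)$ and $s-r(0)-1<s-r(1)$. You should either restrict to $r(0)=0,\ r(1)=1$ (and say so) or replace the translate argument with a case analysis of this kind.
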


\begin{proof}
We follow proof by induction on the value iteration algorithm, which is given by,
\[
V_{k+1}(s) = \max_{a \in \{0,1\}} [(1+\mu) r(s,a) - \lambda a + \gamma  \sum_{s'} p_1(s'|s,a) V_k(s')]
\]

Starting with $V_0(s) = 0$ for all $s$,
\[
V_1(s) = \max\{(1+\mu_r)r(0), (1+\mu_r)r(1) - \lambda\}
\]
There are two cases here. If the chosen $\lambda$ is such that $(1+\mu_r) r(1) - \lambda > (1+\mu_r) r(0)$, then, 
$DV_1((s-1)^+) = V_1(s)-V_1((s-1)^+) = 0$, and $DV_1(s)=V_1(s+1)-V_1(s) = 0$.

We now assume that the hypothesis holds until $k$ iterations, i.e., $DV_k((s-1)^+) \ge DV_k(s)$.

We wish to prove that the hypothesis is true for $(k+1)^{th}$ iteration, i.e., that $DV_{k+1}((s-1)^+) \ge DV_{k+1}(s)$. Or, in other words, we want to prove that 
\begin{equation} \label{eqn: ToProve}
2V_{k+1}(s) \ge V_{k+1}(s+1) + V_{k+1}((s-1)^+).
\end{equation}

We now assume that $a_1$ and $a_2$ are maximizing actions in states $s+1$ and $(s-1)^+$ respectively.

Then,
\begin{align*}
2V_{k+1}(s) 
& \ge V_{k+1}(s+1;a_1) + V_{k+1}(s+1;a_2)\\
&= V_{k+1}(s+1;a_1) + V_{k+1}((s-1)^+;a_2) + \\
& \hspace{10mm} V_{k+1}(s;a_2) - V_{k+1}((s-1)^+;a_2) +\\
& \hspace{10mm} V_{k+1}(s+1;a_2) - V_{k+1}(s;a_2)\\
&= V_{k+1}(s+1) + V_{k+1}((s-1)^+)+\\
& \hspace{10mm} DV_{k+1}((s-1)^+;a_2) - DV_{k+1}(s;a_1)
\end{align*}
Let $B = DV_{k+1}((s-1)^+;a_2) - DV_{k+1}(s;a_1)$. Then,
In view of equation~\eqref{eqn: ToProve}, it is enough if we prove that $B \ge 0$. From now on, we focus on term $B$. We consider all possible combinations of $a_1$ and $a_2$.

\textbf{Case (1):} $a_1 = 1, a_2 = 1$.
Then, 
\begin{align*}
    &DV_{k+1}((s-1)^+;a_2) = V_{k+1}(s;1) - V_{k+1}((s-1)^+;1)\\
    &=  \gamma [(1-\beta)V_k(s-r(1)) +\beta V_k(s-r(1)+1)] -\\
    &\hspace{5mm}\gamma [(1-\beta)V_k((s-1)^+-r(1)) +\beta V_k((s-1)^+ -r(1)+1)]\\
    &= \gamma [(1-\beta)DV_k((s-1)^+-r(1)) + \beta DV_k((s-1)^+-r(1)+1)]
\end{align*}
Similarly,
\begin{align*}
    &DV_{k+1}(s;a_1) = V_{k+1}(s+1;1) - V_{k+1}(s;1)\\
    & = \gamma [(1-\beta)V_k(s + 1 -r(1)) + \beta V_k(s + 1 -r(1)+1)]\\
    & \hspace{5mm} - \gamma [(1-\beta)V_k(s-r(1)) + \beta V_k(s -r(1)+1)]\\
    &= \gamma [(1-\beta)DV_k(s-r(1)) + \beta DV_k(s-r(1)+1)]
\end{align*}
Now, by the induction assumption, we have, $DV_k(s) \leq DV_k((s-1)^+)$. Hence, clearly,
$
B \ge 0.
$
Now, we prove another case. The rest of the cases follow easily.

\textbf{Case (2):} $a_1 = 0, a_2 = 1$.
Then, 
\begin{align*}
    &DV_{k+1}((s-1)^+;a_2) = V_{k+1}(s;1) - V_{k+1}((s-1)^+;1)\\
    &= \gamma [(1-\beta) DV_k((s-1)^+-r(1)) + \beta DV_k((s-1)^+-r(1)+1]
\end{align*}
Similarly,
\begin{align*}
    &DV_{k+1}(s;a_1) = V_{k+1}(s+1;0) - V_{k+1}(s;0)\\
    &=  \gamma [(1-\beta) V_k(s+1-r(0)) + \beta V_k(s+1-r(0)+1)]- \\
    & \hspace{4mm}\gamma [(1-\beta) V_k(s-r(0)) + \beta V_k(s-r(0)+1)]\\
    &= \gamma [(1-\beta) DV_k(s-r(0)) + \beta DV_k(s-r(0)+1)]
\end{align*}
Since we know that $((s-1)^+-r(1)) < s-r(1) < s-r(0) $. 
Therefore, $B = DV_k((s-1)^+;a_2) - DV_k(s;a_1) \ge 0$.

\textbf{Case $3$}: $a_1 =1, a_2 = 0$. Then,
\begin{align*}
    &DV_{k+1}((s-1)^+;a_2) = V_{k+1}(s;0) - V_{k+1}((s-1)^+;0)\\
    &= \gamma [(1-\beta) V_k(s+1-r(0)) + \beta V_k(s+1-r(0)+1)]- \\
    & \hspace{4mm}\gamma [(1-\beta) V_k(s-r(0)) + \beta V_k(s-r(0)+1)]\\
    &= \gamma [(1-\beta) DV_k((s-1)^+-r(0)) + \beta DV_k((s-1)^+-r(0)+1)]
\end{align*}

Similarly,
\begin{align*}
    &DV_{k+1}(s;a_1) = V_{k+1}(s+1;1) - V_{k+1}(s;1)\\
    &= \gamma [(1-\beta) V_k(s+1-r(1)) + \beta V_k(s+2-r(1)+1)]- \\
    &  \hspace{5mm}\gamma [(1-\beta) V_k(s-r(1)) + \beta V_k(s+1-r(1)+1)]\\
    &= \gamma [(1-\beta) DV_k(s-r(1)) + \beta DV_k(s-r(1)+1)]
\end{align*}
Hence,
\begin{align*}
    &DV_{k+1}((s-1)^+;0) - DV_{k+1}(s;1) =  \\
    &= \gamma [(1-\beta) DV_k((s-1)^+-r(0)) +\\
    &\hspace{10mm}\beta DV_k((s-1)^+-r(0)+1)] -\\
    & \hspace{5mm} \gamma [(1-\beta) DV_k(s-r(1)) + \beta DV_k(s-r(1)+1)]
\end{align*}
Since rewards are bounded between $0$ and $1$, which means, $r(1) < r(0) + 1$. Hence, $s - r(0) - 1  < s-r(1)$, and hence using the induction assumption, $B \ge 0$.

Hence, in all cases, $B \ge 0$, and hence, $DV_{k+1}((s_1-1)^+,s_2) \ge DV_{k+1}(s_1,s_2)$. Hence, by the convergence of value iteration algorithm, $DV((s_1-1)^+,s_2) \ge DV(s_1,s_2)$.
\end{proof}

\begin{lemma} \label{lem: Increasinglemma}
   For a given $\lambda$, $\mu_r$,  $V(s-r(1))-V(s-r(0))$ increases with $s$.
\end{lemma}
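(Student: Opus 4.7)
The plan is to deduce this lemma as a direct corollary of the concavity property established in Lemma~\ref{lem: nonincrlemma}. Since the state is integer-valued and the rewards satisfy $r(0) < r(1)$ (with the canonical interpretation $r(0)=0, r(1)=1$), the quantity $V(s-r(0)) - V(s-r(1))$ telescopes into a finite sum of first differences $V(k+1)-V(k)$, and the non-increasing property of those differences proved in Lemma~\ref{lem: nonincrlemma} gives the result essentially immediately.

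Concretely, for $s \geq r(1)$ I would write the identity
$$V(s - r(0)) - V(s - r(1)) \;=\; \sum_{k=s-r(1)}^{s-r(0)-1} \bigl[V(k+1) - V(k)\bigr].$$
By Lemma~\ref{lem: nonincrlemma}, the map $k \mapsto V(k+1)-V(k)$ is non-increasing in $k$. When $s$ is replaced by $s+1$, every index appearing in the above sum shifts up by one, so each summand weakly decreases, and the entire sum is therefore non-increasing in $s$. Negating both sides gives
$$V(s-r(1)) - V(s-r(0)) \;=\; -\sum_{k=s-r(1)}^{s-r(0)-1}\bigl[V(k+1)-V(k)\bigr],$$
which is non-decreasing in $s$, matching the claim.

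The main obstacle that I would treat most carefully is the boundary behavior arising from the $(\cdot)^+$ truncation in the state dynamics. By Lemma~\ref{lem: ifflemma}, the state $s=0$ is always passive for every $\lambda \geq 0$, so the indexability analysis only needs to be meaningful on $s > 0$. In the canonical case $r(0)=0, r(1)=1$ the clipping is vacuous for $s \geq 1$, and the lemma reduces exactly to the concavity inequality of Lemma~\ref{lem: nonincrlemma} applied at index $k=s-1$; no extra work is required. For more general integer $r(0) < r(1)$, I would restrict the telescoping identity to $s \geq r(1)$ where no truncation occurs, and verify that the remaining finite range $0 < s < r(1)$ either is absorbed into the passive set or can be analyzed by comparing a truncated sum (starting from the index $0$) to the full sum, again using the non-increasing property of the first differences.
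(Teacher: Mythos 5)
Your proof is correct and follows essentially the same route as the paper: both derive the lemma as an immediate consequence of the discrete concavity in Lemma~\ref{lem: nonincrlemma}, since the increment $V(s+1-r(1))-V(s-r(1))$ at the smaller argument dominates the increment $V(s+1-r(0))-V(s-r(0))$ at the larger one. The only difference is that your telescoping identity presupposes $r(1)-r(0)$ is a positive integer, whereas the paper's one-line version compares the two unit first differences at $s-r(1)$ and $s-r(0)$ directly and so does not need that integrality assumption.
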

\begin{proof}
From lemma~\ref{lem: nonincrlemma}, we have that $V(s-r(0)+1)-V(s-r(0)) \leq V(s-r(1)+1) - V(s-r(1))$. This means that $V(s+1-r(1)) - V(s+1-r(0)) \ge V(s-r(1)) - V(s-r(0))$, which proves the lemma.
\end{proof}

Now, from  Lemma~\ref{lem: ifflemma} and Lemma~\ref{lem: Increasinglemma}, we obtain the following lemma.
\begin{lemma}
For a given $\lambda$, $\mu_r$, the optimal policy for the optimization problem in equation~\eqref{eqn: singleUEproblem} is of threshold structure. i.e., there exist a state $s$ below which the optimal action is to take action $0$ and above which the optimal action is to take action $1$.
\end{lemma}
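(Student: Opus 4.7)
The plan is to combine the two preceding lemmas directly. By Lemma~\ref{lem: ifflemma}, for fixed $\lambda$ and $\mu_r$, the optimal action at state $s > 0$ is $0$ if and only if
\[
\Delta(s) := (1-\beta)\bigl(V(s-r(1)) - V(s-r(0))\bigr) + \beta\bigl(V(s-r(1)+1) - V(s-r(0)+1)\bigr) \leq C,
\]
where $C := \lambda(1+\mu_r)[r(1)-r(0)]/\gamma$ is a constant independent of $s$. The same lemma also handles the boundary $s=0$, where action $0$ is always optimal, which is consistent with any threshold rule based at a non-negative threshold.

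The main step I would carry out is to show that $s \mapsto \Delta(s)$ is non-decreasing. Lemma~\ref{lem: Increasinglemma} supplies precisely this property for the term $V(s-r(1)) - V(s-r(0))$, and the identical statement applied with the argument shifted to $s+1$ yields non-decreasingness of $V(s-r(1)+1) - V(s-r(0)+1)$ in $s$. Since $\Delta(s)$ is a convex combination with non-negative weights $1-\beta$ and $\beta$ of two non-decreasing sequences, $\Delta$ itself is non-decreasing in $s$.

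With monotonicity of $\Delta$ in hand, the threshold structure follows immediately. Define
\[
s^\ast := \sup\{ s \geq 0 : \Delta(s) \leq C \},
\]
with the convention that $s^\ast = -1$ if the set is empty and $s^\ast = \infty$ if it is unbounded. Because $\Delta$ is non-decreasing, the inactive set $\{s : \Delta(s) \leq C\}$ is downward closed and coincides with $\{0, 1, \ldots, s^\ast\}$; its complement $\{s^\ast+1, s^\ast+2, \ldots\}$ is exactly the active set on which action $1$ is optimal. This is the claimed threshold rule.

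I do not anticipate any substantive obstacle, since Lemmas~\ref{lem: nonincrlemma} and~\ref{lem: Increasinglemma} have already absorbed the analytical work. The only mild care required is that Lemma~\ref{lem: Increasinglemma} covers only one offset pair, so I must invoke it at both $s$ and $s+1$ to cover the two differences appearing in $\Delta(s)$, and I must separately note that the $s=0$ case from Lemma~\ref{lem: ifflemma} is compatible with a threshold of the form $s^\ast \geq 0$.
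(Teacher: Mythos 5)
Your proposal is correct and follows essentially the same route as the paper: invoke Lemma~\ref{lem: ifflemma} for the inactivity criterion and Lemma~\ref{lem: Increasinglemma} for monotonicity of the left-hand side, so that the inactive set is downward closed. Your version is slightly more careful than the paper's (explicitly applying Lemma~\ref{lem: Increasinglemma} at both offsets and handling $s=0$), but the underlying argument is identical.
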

\begin{proof}
    In Lemma~\ref{lem: ifflemma}, if we increase the state, the left hand side will increase from Lemma~\ref{lem: Increasinglemma}. This implies that there is a state below which it is optimal to take action $0$ and above which it is optimal to take action $1$. Hence, the optimal policy has a threshold structure.
\end{proof}
\begin{theorem}
    The optimization problem in equation~\eqref{eqn: simplifiedproblem}, for a given $\mu_r$ is indexable.
\end{theorem}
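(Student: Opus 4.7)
The plan is to derive indexability of~\eqref{eqn: simplifiedproblem} from the threshold structure already established, by showing that the threshold state $s^*(\lambda)$ is non-decreasing in $\lambda$. Indexability is equivalent to the statement that the passive set $P(\lambda) := \{s : a^*(s;\lambda) = 0\}$ satisfies $P(\lambda_1) \subseteq P(\lambda_2)$ whenever $\lambda_1 \leq \lambda_2$, and because we already know $P(\lambda) = \{0, 1, \ldots, s^*(\lambda)\}$ by the preceding threshold lemma, it suffices to verify monotonicity of this threshold.

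The first step is to combine Lemma~\ref{lem: ifflemma} with Lemma~\ref{lem: Increasinglemma} to rewrite the optimality-of-action-$0$ condition as $\phi(s;\lambda) \leq c\lambda$, where $c := (1+\mu_r)(r(1)-r(0))/\gamma > 0$ and $\phi(s;\lambda)$ is the weighted value-function difference from Lemma~\ref{lem: ifflemma}. Lemma~\ref{lem: Increasinglemma} says $\phi(s;\lambda)$ is non-decreasing in $s$ for fixed $\lambda$. Thus the task reduces to showing that for each fixed state $s$, if $\phi(s;\lambda_1) \leq c\lambda_1$ then $\phi(s;\lambda_2) \leq c\lambda_2$ for every $\lambda_2 > \lambda_1$.

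For this step I would use a monotonicity argument in $\lambda$. Because $V(s;\lambda) = \sup_\pi \mathbb{E}^\pi[\sum_t \gamma^t\{(1+\mu_r)r(s_t,a_t) - \lambda a_t\}]$ is a supremum of functions affine in $\lambda$ with non-positive slopes, the value function is convex and non-increasing in $\lambda$. An envelope identity then yields $\partial V(s;\lambda)/\partial \lambda = -d(s;\lambda)$, where $d(s;\lambda)$ is the discounted expected number of active actions under the optimal policy starting from $s$. Substituting into $\phi$ expresses $\partial \phi/\partial \lambda$ as a weighted difference of $d$-values at states separated by at most $r(1)-r(0) \leq 1$. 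By coupling the two trajectories through the threshold policy, this difference can be bounded by $c$, so $\lambda \mapsto c\lambda - \phi(s;\lambda)$ is non-decreasing, and Lemma~\ref{lem: ifflemma} concludes.

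The main obstacle I expect is pinning the bound $\partial \phi/\partial \lambda \leq c$ precisely to the constant $c$; the naive Lipschitz bound $\gamma/(1-\gamma)$ on $d$-values is too loose once $\gamma$ is close to one. A cleaner fallback is a direct policy-dominance argument: fix $\lambda_2 > \lambda_1$ and $s \in P(\lambda_1)$, then evaluate the policy optimal at $\lambda_2$ under the smaller penalty $\lambda_1$. The per-stage penalty contribution shrinks while the reward contribution is unchanged, so the advantage $Q(s,0;\lambda) - Q(s,1;\lambda)$ moves in the direction that keeps action $0$ weakly optimal at $s$. Combined with the threshold structure, this gives $s \in P(\lambda_2)$ and closes the indexability verification.
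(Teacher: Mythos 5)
Your reduction is the right one --- by the threshold lemma it suffices to show the passive set grows with $\lambda$ --- and your first route (the envelope identity $\partial V(s;\lambda)/\partial\lambda = -d(s;\lambda)$ with $d$ the discounted activation count) is a legitimate alternative framing of what must be proved. But you stop exactly at the step that carries all the content: bounding the $\lambda$-derivative of the weighted value-function difference by $1/\gamma$ (equivalently, bounding the differential activation count between the two successor states $s-r(1)$ and $s-r(0)$). You correctly note the naive Lipschitz bound $1/(1-\gamma)$ is too loose, and the coupling you gesture at is never constructed. The paper closes precisely this gap by an explicit induction on value iteration: it proves $DV_{\lambda+\delta}(s)-DV_{\lambda}(s)\le \delta/\gamma$ for every $s$, splitting into cases according to the optimal actions at $s-r(1)$ and $s-r(0)$ under $\lambda$ and $\lambda+\delta$ (using the threshold structure to rule out all but three combinations, and, in the mixed case, using the optimality condition at $s-r(0)$ to convert the explicit $\lambda+\delta$ term back into a value-difference). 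Multiplying by $\gamma$ this exactly offsets the increase $\delta$ in the penalty, which is why the constant must be $\delta/\gamma$ and nothing weaker will do.

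Your fallback ``policy-dominance'' argument has a genuine flaw: it asserts that as $\lambda$ increases, $Q(s,0;\lambda)-Q(s,1;\lambda)$ ``moves in the direction that keeps action $0$ weakly optimal.'' The explicit penalty term does move by $+(\lambda_2-\lambda_1)$ in favor of action $0$, but the continuation term $\gamma\bigl(\mathbb{E}[V(\cdot;\lambda)\mid s,1]-\mathbb{E}[V(\cdot;\lambda)\mid s,0]\bigr)$ also depends on $\lambda$, and it can move in favor of action $1$: the active action leads to a smaller queue, which requires fewer future activations, so its continuation value degrades more slowly as $\lambda$ grows. Whether this adverse drift is dominated by $\lambda_2-\lambda_1$ is exactly the question of indexability --- restless bandits exist for which it is not --- so the argument as stated is circular. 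Evaluating the $\lambda_2$-optimal policy under penalty $\lambda_1$ gives you monotonicity and convexity of $V$ in $\lambda$, but not monotonicity of the passive set. To repair the proof you would need to supply the quantitative bound on the differential activation counts (or, equivalently, reproduce the paper's value-iteration induction), so as written the proposal does not establish the theorem.
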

\begin{proof}
    We prove indexability by picking any arbitrary state $s>0$ that belongs to inactive set with a given $\lambda$. Then, we prove that, when $\lambda$ is increased by $\delta$, state $s$ still belongs to inactive set. We do not prove the case for $s=0$, since it belongs to inactive set for any $\lambda \ge 0$.

    Assuming that $s$ belongs to inactive set under $\lambda$, we use this fact to obtain a bound on $\lambda$. 

    By our assumption, we have,
    $(1+\mu_r) r(0) + \gamma (1-\beta) V_{\lambda}(s-r(0)) + \gamma V_{\lambda}(s-r(0)+1) \ge (1+\mu) r(1) - \lambda  + \gamma (1-\beta) V_{\lambda}(s-r(1)) + \gamma V_{\lambda}(s-r(1)+1)$.

    This implies 
    \[
    \lambda \ge (1+\mu_r)[r(1)-r(0)] + \gamma \left[ (1-\beta) DV_{\lambda}(s) + \beta DV_{\lambda}(s+1)\right],
    \]

    or,
    \[
    (1+\mu_r)[r(0)-r(1)] - \gamma \left[ (1-\beta) DV_{\lambda}(s) + \beta DV_{\lambda}(s+1)\right] \ge -\lambda.
    \]

    Similarly, if we want to prove that $s$ belongs to inactive set under the penalty $\lambda + \delta$, then, we must prove,
    \begin{align*}
    &(1+\mu_r)[r(0)-r(1)] - \gamma \left[ (1-\beta) DV_{\lambda+\delta}(s) + \beta DV_{\lambda+\delta}(s+1)\right] \\
    &\ge -\lambda - \delta.
    \end{align*}

    Consider the left hand side,
    \begin{align*}
        &(1+\mu_r)[r(0)-r(1)] - \gamma \left[ (1-\beta) DV_{\lambda+\delta}(s) + \beta DV_{\lambda+\delta}(s+1)\right] \ge\\
        & -\lambda + \gamma \left[ (1-\beta) DV_{\lambda}(s) + \beta DV_{\lambda}(s+1)\right] -\\
        &\hspace{10mm}\gamma \left[ (1-\beta) DV_{\lambda+\delta}(s) + \beta DV_{\lambda+\delta}(s+1)\right]\\
        &= -\lambda + \gamma \left[(1-\beta) (DV_{\lambda}(s)-DV_{\lambda+\delta}(s) \right] + \\
        &\hspace{10mm}\left[(1-\beta) (DV_{\lambda}(s+1)-DV_{\lambda+\delta}(s+1) \right].
    \end{align*}

    Hence, if we show that ~$\gamma(1-\beta)  (DV_{\lambda+\delta}(s) - DV_{\lambda}(s)) +  \gamma \beta (DV_{\lambda+\delta}(s+1) - DV_{\lambda}(s+1)) \leq \delta$, then we are done.
    In view of this, we show that $DV_{\lambda+\delta}(s)-DV_{\lambda}(s) \leq \frac{\delta}{\gamma}$. Then, we have, 
    \begin{align*}
        &(1-\beta)  (DV_{\lambda+\delta}(s)) - DV_{\lambda}(s))+
        \beta (DV_{\lambda+\delta}(s+1) - DV_{\lambda}(s+1)) \\
        &\hspace{10mm}\leq  (1-\beta) \frac{\delta}{\gamma} + \gamma \beta \frac{\delta}{\gamma}= \frac{\delta}{\gamma}
    \end{align*}

    We now aim to show that $DV_{\lambda+\delta}(s)-DV_{\lambda}(s) \leq \frac{\delta}{\gamma}$, for any state $s$.

    We show this via mathematical induction on the Value iteration algorithm.

    We start with $k=0$, with $V^0(s) = 0$, $\forall s$. Then,
    \begin{align*}
       &DV^1_{\lambda+\delta}(s)  -  DV^1_{\lambda}(s) =[V^1_{\lambda+\delta}(s-r(1)) - V^1_{\lambda+\delta}(s-r(0))] \\
       &\hspace{30mm}- [V^1_{\lambda}(s-r(1)) - V^1_{\lambda}(s-r(0))]\\
       &= \max\{(1+\mu_r)r(0), (1+\mu_r)r(1)-\lambda -\delta\} \\
       &\hspace{5mm}- \max\{(1+\mu_r)r(0) , (1+\mu_r)r(1)-\lambda - \delta\} \\
       &\hspace{5mm}- \max\{(1+\mu_r)r(0), (1+\mu_r)r(1)-\lambda\} \\
       &\hspace{5mm}+ \max\{(1+\mu_r)r(0), (1+\mu_r)r(1)-\lambda\}\\
       & \leq 0,  \\
       &\leq \frac{\delta}{\gamma}, ~~\text{in all cases}.
    \end{align*}
    Assume that the induction assumption holds until $k^{th}$ iteration of value iteration algorithm, i.e., $DV^k_{\lambda+\delta}(s)  -  DV^k_{\lambda}(s) \leq \frac{\delta}{\gamma}$, for any $s$.

    We must prove that the hypothesis is true for $(k+1)^{th}$ iteration, i.e., that $DV^{k+1}_{\lambda+\delta}(s)  -  DV^{k+1}_{\lambda}(s) \leq \frac{\delta}{\gamma}$.

    Consider 
    \begin{align*}
        &DV^{k+1}_{\lambda}(s) = V^{k+1}_{\lambda}(s-r(1)) - V^{k+1}_{\lambda}(s-r(0))\\
        & =\max [(1+\mu_r) r(0) + \sum_{s'} p(s'|s-r(1),0) V_{\lambda}^k(s'),  \\
        & \hspace{10mm} (1+\mu_r) r(1) -\lambda + \sum_{s''} p(s''|s-r(1),1) V_{\lambda}^k(s'')]\\
        & -\max [(1+\mu_r) r(0) + \sum_{s'''} p(s'''|s-r(0),0) V_{\lambda}^k(s'''),  \\
        & \hspace{10mm} (1+\mu_r) r(1) -\lambda + \sum_{s''''} p(s''''|s-r(0),1) V_{\lambda}^k(s'''')]\\
    \end{align*}
    Similarly,
    \begin{align*}
        &DV^{k+1}_{\lambda+\delta}(s) = V^{k+1}_{\lambda+\delta}(s-r(1)) - V^{k+1}_{\lambda+\delta}(s-r(0))\\
        &= \max [(1+\mu_r) r(0) + \sum_{s'} p(s'|s-r(1),0) V_{\lambda+\delta}^k(s'),  \\
        & \hspace{10mm} (1+\mu_r) r(1) -\lambda + \sum_{s''} p(s''|s-r(1),1) V_{\lambda+\delta}^k(s'')]
        \end{align*}
        \begin{align*}
        &- \max [(1+\mu_r) r(0)  + \sum_{s'''} p(s''''|s-r(0),0) V_{\lambda+\delta}^k(s'''),  \\
        & \hspace{10mm} (1+\mu_r) r(1) -\lambda + \sum_{s''''} p(s''''|s-r(0),1) V_{\lambda+\delta}^k(s'''')].
    \end{align*}
    Let under $\lambda$, the optimal action in state $s-r(1)$ be $a_1$, and $s-r(0)$ be $a_2$, and under $\lambda+\delta$, the respective actions be $a_3$ and $a_4$. 

    We know that the optimal policy at $(k+1)^{th}$ iteration has a threshold structure. Hence, since by assumption, $s$ belongs to the inactive set under any penalty $\lambda'$, then $s-r(0)$ also belongs to inactive set under penalty $\lambda'$. This implies that $s-r(1)$ also belongs to inactive set under penalty $\lambda'$. Hence, there are only $3$ possible combinations for the optimal actions. We now prove the assertion for all these three combinations.
    
    \textbf{Case $1$:} $a_1 = 0, a_2 = 0, a_3 = 0, a_4 = 0$.
    
     \begin{align*}
        &DV^{k+1}_{\lambda}(s) = V^{k+1}_{\lambda}(s-r(1)) - V^{k+1}_{\lambda}(s-r(0))\\
        & \hspace{2mm}=[(1+\mu_r) r(0)  + \gamma \sum_{s'} p(s'|s-r(1),0) V_{\lambda}^k(s')]  \\
        & \hspace{5mm} - [(1+\mu_r) r(0) -\lambda + \gamma \sum_{s'''} p(s'''|s-r(0),1) V_{\lambda}^k(s''')]
        \end{align*}
        \begin{align*}
        &\hspace{2mm}= \gamma \beta V^k_{\lambda} (s-r(1)-r(0)+1) + \gamma (1-\beta) V^k_{\lambda} (s-r(1)-r(0)) \\
        &\hspace{4mm}- \gamma \beta V^k_{\lambda} (s-r(0)-r(0)+1) +\gamma (1-\beta) V^k_{\lambda} (s-r(0)-r(0))\\
        &\hspace{2mm}= \gamma \beta DV^k_{\lambda}(s-r(0)+1) + \gamma (1-\beta) DV^k_{\lambda}(s-r(0)) .
    \end{align*}
    Similarly,
    \begin{align*}
        &DV^{k+1}_{\lambda+\delta}(s) = \\
        &\hspace{5mm} \gamma \beta DV^k_{\lambda+\delta}(s-r(0)+1) + \gamma (1-\beta) DV^k_{\lambda+\delta}(s-r(0)) .
    \end{align*}
    Hence,
    \begin{align*}
        &DV^{k+1}_{\lambda+\delta}(s) - DV^{k+1}_{\lambda}(s) \\
        &\hspace{2mm}=\gamma \beta [DV^{k}_{\lambda+\delta}(s-r(0)+1)) - DV^{k}_{\lambda}(s-r(0)+1) ] \\
        & \hspace{5mm} +\gamma (1-\beta) [DV^{k}_{\lambda+\delta}(s-r(0)) - DV^{k}_{\lambda}(s-r(0))] \\
        &\hspace{2mm} \leq \gamma \beta \frac{\delta}{\gamma} + \gamma (1-\beta) \frac{\delta}{\gamma} \leq \delta  \leq \frac{\delta}{\gamma}.
    \end{align*}
    The last inequalities are from the induction assumption for any state.
    We will show another case which is slightly complex now.

    \textbf{Case $2$:} $a_1 = 0$, $a_2=0$, $a_3 = 1$, $a_4=1$.
    Similar to the previous case, we have,
    \begin{align*}
        &DV^{k+1}_{\lambda}(s) \\
        &\hspace{2mm} =\gamma \beta DV^k_{\lambda}(s-r(0)+1) + \gamma (1-\beta) DV^k_{\lambda}(s-r(0)) .
    \end{align*}
    and 
    \begin{align*}
        &DV^{k+1}_{\lambda+\delta}(s) \\       & \hspace{2mm}=\gamma \beta DV^k_{\lambda+\delta}(s-r(1)+1) + \gamma (1-\beta) DV^k_{\lambda+\delta}(s-r(1)) .
    \end{align*}
    Therefore,
    \begin{align*}
        &DV^{k+1}_{\lambda+\delta}(s) - DV^{k+1}_{\lambda}(s) \\
        &\hspace{1mm}= \gamma \beta [DV^k_{\lambda+\delta}(s-r(1)+1) - DV^k_{\lambda}(s-r(0)+1) ]\\
        &\hspace{3mm}+\gamma (1-\beta)[DV^k_{\lambda+\delta}(s-r(1)) - DV^k_{\lambda}(s-r(0))]\\
        &\hspace{1mm} \leq \gamma \beta [DV^k_{\lambda+\delta}(s-r(0)+1) - DV^k_{\lambda}(s-r(0)+1)]\\
        & \hspace{5mm} +\gamma (1-\beta)[DV^k_{\lambda+\delta}(s-r(0)) - DV^k_{\lambda}(s-r(0))]\\
        & \leq \delta \leq \frac{\delta}{\gamma},
    \end{align*}
    where the last but one inequality is from the Lemma~\ref{lem: Increasinglemma}.

    \textbf{Case $3$:} $a_1 = 0, a_2 = 0, a_3 = 0, a_4 = 1$.

    Similar to the previous case, we have,
    \begin{align*}
        &DV^{k+1}_{\lambda}(s) = \gamma \beta DV^k_{\lambda}(s-r(0)+1) + \gamma (1-\beta) DV^k_{\lambda}(s-r(0)) .
    \end{align*}
    and 
    \begin{align*}
        &DV^{k+1}_{\lambda+\delta}(s)\\
        &\hspace{2mm}=(1+\mu_r) r(0) + \gamma \sum_{s'} p(s'|s-r(1),0)V^k(s') \\
        &\hspace{4mm} - [(1+\mu_r)r(1) - \lambda - \delta + \gamma \sum_{s''}p(s''|s-r(0),1)V^k(s'')]\\
        &\hspace{2mm}= (1+\mu_r)[r(0)-r(1)] + \lambda + \delta \\
        &\hspace{10mm}+\gamma \beta V^k_{\lambda+\delta} (s-r(1)-r(0)+1) \\
        &\hspace{10mm}+ \gamma (1-\beta) V^k_{\lambda+\delta} (s-r(1)-r(0)) \\
        &\hspace{10mm}-\gamma \beta V^k_{\lambda+\delta} (s-r(0)-r(1)+1)\\
        & \hspace{10mm}- \gamma (1-\beta) V^k_{\lambda+\delta} (s-r(0)-r(1))\\
        &= (1+\mu_r)[r(0)-r(1)] +  \lambda + \delta.
    \end{align*}
    We now want to obtain an upper bound on the above. For this, we make use of our case assumption that at $s-r(0)$, the optimal action $a_4 = 1$.
    Then,
    \begin{align*}
        &(1+\mu_r) r(1) - \lambda - \delta + \gamma \sum_{s'} p(s'|s-r(0),1) V^k_{\lambda+\delta}(s') \ge \\
        &\hspace{5mm}(1+\mu_r) r(0)  + \gamma \sum_{s''} p(s''|s-r(0),0) V^k_{\lambda+\delta}(s'').
    \end{align*}
    which means, 
    \begin{align*}
        &(1+\mu_r) [r(0)-r(1)] + \lambda + \delta  \\
        &\hspace{2mm} \leq \gamma \beta D^k_{\lambda+\delta}(s-r(0)+1) +\\
        &\hspace{10mm}\gamma (1-\beta) D^k_{\lambda+\delta}(s-r(0)) .
    \end{align*}
    Hence,
    \begin{align*}
        &DV^{k+1}_{\lambda+\delta}(s) \\
        & \hspace{2mm}\leq \gamma \beta D^k_{\lambda+\delta}(s-r(0)+1) +\\
        &\hspace{5mm}\gamma (1-\beta) D^k_{\lambda+\delta}(s-r(0)) .
    \end{align*}
    Therefore,
    \begin{align*}
        &DV^{k+1}_{\lambda+\delta}(s) - DV^{k+1}_{\lambda}(s) \\
        & \hspace{2mm}=\gamma \beta [DV^k_{\lambda+\delta}(s-r(0)+1) - DV^k_{\lambda}(s-r(0)+1) ]\\
        & \hspace{5mm}+\gamma (1-\beta)[DV^k_{\lambda+\delta}(s-r(0)) - DV^k_{\lambda}(s-r(0))]\\
        &\hspace{2mm} \leq \delta  \leq \frac{\delta}{\gamma}.
    \end{align*}
    Hence, we proved that, in all cases, 
$DV^{k+1}_{\lambda+\delta}(s) - DV^{k+1}_{\lambda}(s) \leq \frac{\delta}{\gamma}$. By the convergence of value iteration algorithm, we have, $DV_{\lambda+\delta}(s) - DV_{\lambda}(s) \leq \frac{\delta}{\gamma}$.

This in turn proves that state $s$ belongs to the inactive set under penalty $\lambda+\delta$, when it belongs to inactive set under penalty $s$. This proves indexability.
\end{proof}
\end{document}